\begin{document}
\newcommand{\pic}{$\spadesuit$}
\newcommand{\N}{\mathbb{N}}
\newcommand{\Z}{\mathbb{Z}}
\newcommand{\C}{\mathbb{C}}
\newcommand{\R}{\mathbb{R}}
\newcommand{\F}{\mathbb{F}}
\newcommand{\defineL}{\,\mathrel{\mathop:}=}
\newcommand{\defineR}{ =\mathrel{\mathop:}\,}
\newcommand{\be}{\begin{equation}}
\newcommand{\ee}{\end{equation}}
\newcommand{\bi}{\begin{itemize}}
\newcommand{\ei}{\end{itemize}}
\newcommand{\with}{\qquad\text{with}\qquad}
\newcommand{\mand}{\qquad\text{and}\qquad}
\newcommand{\sep}{\; \text{,}\qquad}
\newcommand{\ssep}{\: \text{,}\quad}
\newcommand{\pd}{\partial}
\newcommand{\pdo}{\overline{\partial}}
\newcommand{\ov}[1]{\overline{#1}}
\newcommand{\inv}[1]{\frac{1}{#1}}
\newcommand{\tinv}[1]{\tfrac{1}{#1}}
\newcommand{\abs}[1]{|#1|}
\newcommand{\B}[1]{\mathbf{#1}}
\newcommand{\op}[1]{\text{#1}}
\newcommand{\bracket}[1]{\left( #1 \right)}
\newcommand{\bracketi}[1]{\bigl( #1 \bigr)}
\newcommand{\bracketii}[1]{\Bigl( #1 \Bigr)}
\newcommand{\bracketiii}[1]{\biggl( #1 \biggr)}
\newcommand{\bracketiv}[1]{\Biggl( #1 \Biggr)}
\newcommand{\bpm}{\begin{pmatrix}}
\newcommand{\epm}{\end{pmatrix}}
\newcommand{\bnm}{\begin{matrix}}
\newcommand{\enm}{\end{matrix}}
\newcommand{\w}{\wedge}
\newcommand{\tp}{\otimes}
\newcommand{\ds}{\oplus}
\newcommand{\no}{\nonumber}
\newcommand{\DD}{\mathfrak{D}}
\newcommand{\LL}{\mathfrak{L}}
\newcommand{\FF}{\mathfrak{F}}
\newcommand{\dd}{\mathfrak{d}}
\newcommand{\ff}{\mathfrak{f}}
\newcommand{\cc}{\mathfrak{c}}
\newcommand{\nn}{\mathfrak{n}}
\newcommand{\mm}{\mathfrak{m}}
\newcommand{\coder}{\mathfrak{coder}}
\newcommand{\cohom}{\mathfrak{cohom}}
\newcommand{\morph}{\mathfrak{morph}}
\newcommand{\liftd}[1]{\widehat{#1}}
\newcommand{\lifte}[1]{e^{#1}}
\newcommand{\vect}[1]{\accentset{\rightharpoonup}{#1}}
\newcommand{\T}{\mathsf{T}}

\newtheorem{Def}{Definition}
\newtheorem{Thm}{Theorem}
\newtheorem{Lem}{Lemma}

\title{Homotopy Classification of Bosonic String Field Theory\vspace{1.0cm}}

\author{Korbinian M\"unster}
\email{korbinian.muenster@physik.uni-muenchen.de} 
\author{Ivo Sachs}
\email{ivo.sachs@physik.uni-muenchen.de}
\affiliation{Arnold Sommerfeld Center for Theoretical Physics, Theresienstrasse 37, D-80333 Munich, Germany}

\date{\today}

\begin{abstract}
\vspace{1.7cm}
\centerline{\bf Abstract \vspace{0.4cm}}
We prove the decomposition theorem for the  loop homotopy algebra of quantum closed string field theory and use it to show that closed string field theory is unique up to gauge transformations on a given string background and given S-matrix. For the theory of open and closed strings we use results in open-closed homotopy algebra to show that  the space of inequivalent open string field theories is isomorphic to the space of classical closed string backgrounds. As a further application of the open-closed homotopy algebra we show that string field theory is background independent and locally unique in a very precise sense. Finally we discuss topological string theory in the framework of homotopy algebras and find a generalized correspondence between closed strings and open string field theories.
\end{abstract}
\maketitle
\thispagestyle{empty}
\newpage
\tableofcontents
\newpage


\section{Introduction and Summary}\label{sec:intro}
Historically, the first consistent, interacting formulation of string field theory is Witten's open cubic string field theory \cite{Witten cubic, Leclair cubic,Thorn cubic}. Its algebraic structure is rather simple:  The BRST differential $Q$ and the star product $\ast$, which define the kinetic term and the cubic interaction respectively, satisfy the axioms of a differential graded associative algebra (DGA). More generally, it turns out  \cite{Zwiebach open1} that any formulation of open string field theory realizes an $A_\infty$-algebra, a generalization of a DGA where associativity holds only up to homotopy. 

The general procedure of constructing covariant string field theory, as described by Zwiebach \cite{Zwiebach closed,Zwiebach open-closed}, requires a decomposition of the relevant moduli space of Riemann surfaces into elementary vertices  and graphs. This decomposition guarantees a single cover of moduli space via Feynman rules  and implies that the vertices satisfy a BV master equation. In a second step one employs the operator formalism of the world sheet conformal field theory to construct a morphism of BV algebras from the moduli space to the space of  multilinear functions on the (restricted) state space of the conformal field theory. This is where background dependence enters in the construction. 

At the classical level, the multilinear maps on the state space of the CFT satisfy the axioms of an $A_\infty$- (open string) or $L_\infty$- (closed string) algebra. The classification of physically inequivalent string field theories is then obtained with the help of the decomposition theorem \cite{Kontsevich, Kajiura open1}. This theorem establishes an isomorphism between a given homotopy algebra and the direct sum of a linear contractible algebra and a minimal model. In the context of string field theory, the structure maps of the minimal model are identical to the tree-level $S$-matrix elements of the perturbative string theory in the string background corresponding to the trivial Maurer-Cartan element of the homotopy algebra  \cite{Kajiura open1, Kajiura open2}. 

One purpose of this paper is to extend this classification to quantum closed SFT. To this end we proof the decomposition theorem for loop homotopy algebras, which are a special case of $IBL_\infty$-algebras. We then utilize the decomposition theorem to show that string field theory is unique up to gauge transformations on a given string background. More precisely, two string field theories constructed on the same string background, in particular inducing the same S-matrix, are connected by a 1-parameter family of strong $IBL_\infty$-isomorphisms. This is the algebraic counterpart of the statement that the string vertices at the geometric level define an unique element in the cohomology of the boundary operator plus BV operator on the moduli space of Riemann surfaces \cite{ZS background1, ZS background2, Costello}. 

Given the above result one is naturally led to ask if changes in the closed string background are the only non-trivial deformations of closed string field theory compatible with the operator formalism. We will answer this question within the restriction to deformations which leave the state space of the CFT invariant. In this case we will first establish background independence which amounts to proving that 
shifts in the closed string background are equivalent to conjugation by Maurer-Cartan elements of the homotopy algebra. Since such transformations correspond to weak  $IBL_\infty$-isomorphisms we can define bigger equivalence classes where different closed string backgrounds are identified. We then establish uniqueness of closed string field theory in the sense that there is no non-trivial infinitesimal deformation of closed string field theory compatible with the operator formalism.  

Next we turn to open-closed string field theory. The reformulation of open-closed SFT in terms of homotopy algebras (see \cite{Kajiura open-closed1, Kajiura open-closed2} for the classical case and \cite{qocha} for the quantum theory) relates (quantum) closed, open and open-closed vertices of the SFT to structure maps of $(IB)L_\infty$-,  $A_\infty$-algebras and  $(IB)L_\infty$-morphism respectively. As we will explain, classical closed string Maurer-Cartan elements (closed string backgrounds) modulo closed string gauge transformations, are in  one-to-one correspondence with classically consistent open string field theories modulo gauge transformations, which include open string background transformations as well as open string field redefinitions. Thus a closed string background not only determines a unique closed string field theory but also a unique classical open SFT, modulo gauge transformations.

We will show that the latter isomorphism persists at the quantum level although the complete quantum closed string  Maurer-Cartan equation will generically have no solutions, which is a reflection of the fact that a SFT of just open strings is quantum mechanically incomplete. The exception to this is when the closed string symplectic structure is degenerate on shell, i.e. on the cohomology of the closed string BRST operator. This is one  of the distinguishing features of the topological string. In the latter case the Maurer-Cartan equation decomposes into two irreducible parts: an equation for the background and linear equation for the propagator.

 We should also emphasize the relevance of the open-closed correspondence  in establishing background independence  of closed string field theory described above. The isomorphism just described is instrumental in establishing background independence within the class of backgrounds that preserve the vector space of perturbative fluctuations. The details of this will be explained in the text.

\section{The Homotopy Algebra of String Field Theory}\label{sec:infty}

String vertices represent subspaces, i.e. singular chains, of the moduli space of Riemann surfaces. The corresponding chain complex admits the structure of a
BV algebra \cite{Zwiebach closed, Zwiebach open-closed}. The basic requirement for any SFT, that it reproduces the S-matrix amplitudes of perturbative string theory, translates into the statement 
that the singular chains defining the string vertices satisfy the BV master equation. This is the background independent data of SFT \cite{ZS background1,ZS background2}. A string background determines a world sheet  conformal field theory where the state space $A$ of this CFT (or a certain restriction thereof) is equipped with an odd symplectic structure $\omega$. This in turn makes the space $C(A)$ of functions on $A$ (the space of multilinear maps on $A$ with suitable symmetry properties) a BV algebra. The world sheet CFT  defines a morphism of BV algebras which implies that the 
BV master equation is also satisfied at the level of $C(A)$ \cite{Zwiebach closed, Zwiebach open-closed}.  

The most general theory involves open and closed strings and we have to consider the moduli spaces $\mathcal{P}^{b,g}_{n,m}$ \cite{Zwiebach open-closed}, where $g$ is the genus, $b$ is the number of boundaries, $n$ is the number of closed string punctures and $m=(m_1,\dots,m_b)$ where $m_i$ is the number of open string punctures on the $i$-th boundary. Furthermore, the geometric vertices which we will denote by $\mathcal{V}^{b,g}_{n,m}\subset \mathcal{P}^{b,g}_{n,m}$, have to be invariant under the following transformations:
\bi
\item[(i)]cyclic permutation of open string punctures on one boundary
\item[(ii)]arbitrary permutation of closed string punctures
\item[(iii)]arbitrary permutation of boundaries
\ei
Consider now a fixed background, that defines a world sheet CFT. The corresponding state space of open strings is denoted by $A_o$ and the restricted state space of closed strings (those states annihilated by $b_0^-$ and $L_0^-$) by $A_c$. We use the conventions where the string fields have degree zero, both in the closed string and the open string sector \cite{Zwiebach open-closed,qocha}. The world sheet CFT preserves the above symmetry properties, that is
\be\no
 \mathcal{P}^{b,g}_{n,m}\supset \mathcal{V}^{b,g}_{n,m}\mapsto f^{b,g}_{n,m}\in \op{Hom}\bracketi{A_c^{\w n}\tp (A_o^{\tp m_1})^{cycl}\w \dots \w (A_o^{\tp m_b})^{cycl} ,R}\;\text{,}
\ee
where $\wedge$ denotes the graded symmetric product and $R$ is the module of commuting and anti-commuting numbers. The maps $f^{b,g}_{n,m}$ are the algebraic string vertices. In the following we will usually not distinguish between algebraic and geometric vertices, whenever the meaning is clear from the context. The string field theory action for the open string field $a\in A_o$ and the closed string field $c\in A_c$ is then given by the sum of all string vertices, weighted with appropriate powers of $\hbar$ and symmetry factors \cite{Zwiebach open-closed}:
\be\label{eq:ocaction}
S(c,a)=\sum_{b,g}\sum_{n,m}\inv{b!} \inv{n!}\inv{m_1\dots m_b} \,\hbar^{2g+b+n/2-1}\,f^{b,g}_{n,m}\bracketi{c^{\w n};a^{\tp m_1},\dots,a^{\tp m_b}} \;\text{.}
\ee
The quantum BV master equation reads 
\be\label{eq:BVeq}
\hbar\Delta^{BV}S+\inv{2}(S,S) =0\;\text{,}
\ee
where $\Delta^{BV}$ is the BV operator induced by the odd symplectic structure $\omega$ (bpz inner product) on the state space of the world sheet CFT, and $(\cdot,\cdot)$ is the associated odd Poisson bracket (antibracket) \cite{qocha, Schwarz bv}. Since the odd symplectic structure splits into open and closed parts $\omega=\omega_o+\omega_c$, the BV operator and the odd Poisson bracket split as well:
\be\no
\Delta^{BV}=\Delta^{BV}_o+\Delta^{BV}_c \sep (\cdot,\cdot)=(\cdot,\cdot)_o+(\cdot,\cdot)_c \;\text{.}
\ee
The geometric counterpart of $\Delta^{BV}_o$ and $\Delta^{BV}_c$ at the level of chain complexes of moduli spaces is the sewing of open and closed string punctures, respectively. 
The homotopy algebra corresponding to that full-blown theory is the quantum open-closed homotopy algebra (QOCHA) \cite{qocha}, but there are many sub-algebras corresponding to certain limits of this theory, which will be discussed in the following.

\subsubsection{Classical Theory}
Let us consider the limit where we restrict to those moduli spaces that are closed under sewing at tree level. For open SFT the relevant surfaces are discs with punctures on the boundary, whereas in closed SFT we have to consider punctured spheres. 

\begin{figure}[h] \centering
\begin{minipage}[b]{0.4\textwidth}
classical open SFT\\\vspace{2mm}
\resizebox{2.5cm}{!}{
\input{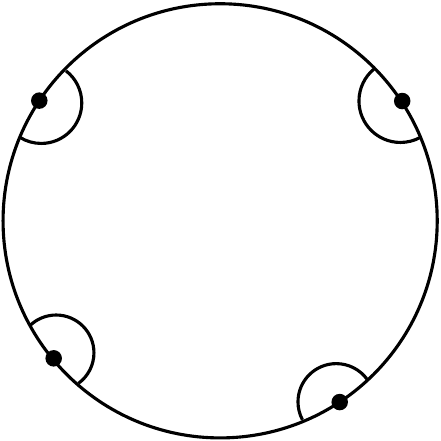_t}}
\end{minipage}
\begin{minipage}[b]{0.4\textwidth}
classical closed SFT\\\vspace{2mm}
\resizebox{2.5cm}{!}{
\input{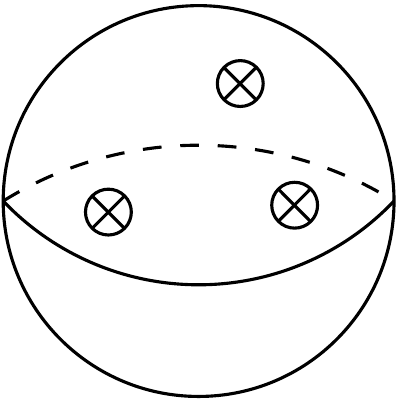_t}}
\end{minipage}
\caption{punctured disc/sphere} 
\end{figure}
Similarly such a theory satisfies the classical BV master equation
\be\no
(S,S)=0 \;\text{.}
\ee

In classical open SFT, the action thus reads (see equation (\ref{eq:ocaction}))
\be\no
S(a)=\sum_n \inv{n} f^{1,0}_{0,n}(a^{\tp n}) \;\text{,}
\ee
and the classical BV master equation implies that the multilinear maps  $m_n:A^{\tp n}_o\to A_o$ defined by
\be\no
\omega_o(m_n\,,\,\cdot\,)\defineL  f^{1,0}_{0,n+1}
\ee
satisfy the relations of an $A_\infty$-algebra\footnote{An $A_\infty$-algebra actually corresponds to the case of a single D-brane. For several D-branes, one obtains a Calabi-Yau $A_\infty$ category (See for example \cite{Costello, Chen liebi}).} \cite{Zwiebach open1}. Similarly the multilinear maps $l_n:A_c^{\w n}\to A_c$ associated to the classical action $S(c)$ of closed SFT (after absorbing $\hbar^{1/2}$ in the closed string field)
\be\no
S(c)=\sum_n \inv{n!} f^{0,0}_{n,0}(c^{\w n})  \sep \omega_c(l_n\,,\,\cdot\,)\defineL  f^{0,0}_{n+1,0} \;\text{,}
\ee
obey the relations of a $L_\infty$-algebra \cite{Zwiebach closed}.

Finally, there is also a sub-algebra corresponding to a theory of open and closed strings. We consider spheres with closed string punctures, discs with open string punctures and additionally discs with open and closed punctures.

\begin{figure}[h] \centering
\begin{minipage}[b]{0.3\textwidth}
\resizebox{2.5cm}{!}{
\input{sphere.pdf_t}}
\end{minipage}
\begin{minipage}[b]{0.3\textwidth}
\resizebox{2.5cm}{!}{
\input{discopen.pdf_t}}
\end{minipage}
\begin{minipage}[b]{0.3\textwidth}
\resizebox{2.5cm}{!}{
\input{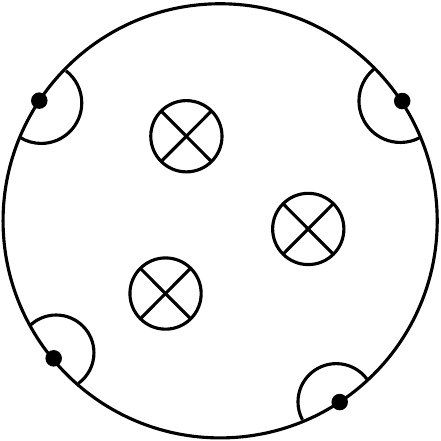_t}}
\end{minipage}
\caption{surfaces of 'classical' open-closed SFT}
\end{figure}

In order to make this theory well defined, we have to exclude the operation of sewing a closed string puncture on one disc to another closed string puncture on a second disc. This would produce 
surfaces with more than one boundary, i.e. surfaces which are not part of the theory. Physically speaking, we treat the closed string as an external field. After absorbing $\hbar^{1/2}$ in the closed string field, the action reads
\be\no
S(c,a)=\inv{\hbar}\sum_n \inv{n!} f^{0,0}_{n,0}(c^{\w n})+\sum_n \inv{n} f^{1,0}_{0,n}(a^{\tp n})+\sum_{n,m} \inv{n!}\inv{m} f^{1,0}_{n,m}(c^{\w n};a^{\tp m}) \;\text{,}
\ee 
and satisfies the classical BV master equation to order $\hbar^{0}$ (Note that the closed string Poisson bracket is proportional to $\hbar$ in this normalization.). Translated into the language of homotopy algebras we get the following: 
Let's define multilinear maps $n_{n,m}:A_c^{\w n}\tp A_o^{\tp m}\to A_o$ associated to discs with open and closed punctures by
\be\no
\omega_o(n_{n,m}\,,\,\cdot\,)\defineL f^{1,0}_{n,m+1} \;\text{.}
\ee
Furthermore we collect the individual maps to
\begin{align}\no
l&\defineL \sum_n l_n \;:\;SA_c\to A_c \\\no
m&\defineL \sum_n m_n \;:\;TA_o\to A_o \\\no
n&\defineL \sum_{n,m} n_{n,m} \;:\; SA_c\tp TA_o\to A_o \;\text{,}
\end{align}
where $TA$ and $SA$ denote the tensor algebra and the graded symmetric tensor algebra respectively. To the first two maps we can associate a coderivation (see appendix \ref{app1} for details about coderivations and homotopy algebras). That is,
\begin{align}\no
L&\defineL \liftd{l} \;\in \op{Coder}^{cycl}(SA_c) \\\no
M&\defineL \liftd{m} \;\in \op{Coder}^{cycl}(TA_o) \\\no
N&\defineL \liftd{n} \;:\; SA_c \to \op{Coder}^{cycl}(TA_o) \;\text{,}
\end{align}
where  the map $N$, associated to discs with open and closed punctures, induces an $L_\infty$-morphism from the $L_\infty$-algebra $(A_c,L)$ of closed strings  to the differential graded Lie algebra $(\op{Coder}^{cycl}(TA_o),d_h,[\cdot,\cdot])$ which controls deformations of the open string field theory $(A_o,M)$ \cite{Kajiura open-closed1,Kajiura open-closed2}: 
\be\label{eq:ocha}
(A_c,L)\xrightarrow[]{L_\infty-\text{morphism}} (\op{Coder}^{cycl}(TA_o),d_h,[\cdot,\cdot])\;\text{}
\ee
More precisely, we have
\be\label{eq:ochaex}
N\circ L= d_h\circ N +\inv{2} [N,N]\circ \Delta \;\text{,}
\ee
where $\Delta$ denotes the comultiplication in $SA_c$. This algebra is called open-closed homotopy algebra (OCHA) \cite{Kajiura open-closed1,Kajiura open-closed2} and will be essential in section \ref{sec:oc}.

\subsubsection{Quantum Theory}
At the quantum level there there is no consistent open SFT, since e.g. open string one-loop diagrams can be interpreted as closed string tree-level amplitudes. 
For a theory of closed strings we have to consider surfaces of arbitrary genus with an arbitrary number of punctures, and the action according to equation (\ref{eq:ocaction}) reads (after absorbing appropriate powers of $\hbar $)
\be\no
S(c)=\sum_g \sum_n \frac{\hbar^{g}}{n!} \;f^{0,g}_{n,0}(c^{\w n}) \;\text{.}
\ee
We define multilinear maps $ l^{g}_n:A_c^{\w n}\to A_c$ via
\be\no
\omega_c(l^g_n \,,\, \cdot)\defineL f^{0,g}_{n+1,0} \;\text{,}
\ee
and lift $l^g=\sum_n l^g_n$ to a coderivation
\be\no
L^g\defineL \liftd{l}^g \in \op{Coder}^{cycl}(SA_c) \;\text{.}
\ee
The closed string BV operator $\Delta^{BV}_c$ requires the inclusion of a second order coderivation $\Omega_c^{-1}$, which is defined to be the lift of  the inverse of the odd symplectic structure $\omega_c$:
\be\no
\Omega_c^{-1}\defineL \widehat{\omega_c^{-1}} \in \op{Coder}^2(SA_c)\;\text{.}
\ee
The main identity of closed string field theory \cite{Zwiebach closed} together with the cyclicity condition is equivalent to the statement that $\frak{L}_c\in \coder(SA_c,\hbar)$ defined by
\be\label{eq:loop}
\frak{L}_c\defineL \sum_g \hbar^g L^g + \hbar\,\Omega_c^{-1} \;\text{,}
\ee
squares to zero \cite{Markl loop}. This algebra is called loop homotopy algebra, which is obviously a special case of an $IBL_\infty$-algebra (see appendix \ref{app1}). 

\begin{figure}[h] \centering
\begin{minipage}[t]{0.4\textwidth}
closed SFT\\\vspace{2mm}
\resizebox{5cm}{!}{
\input{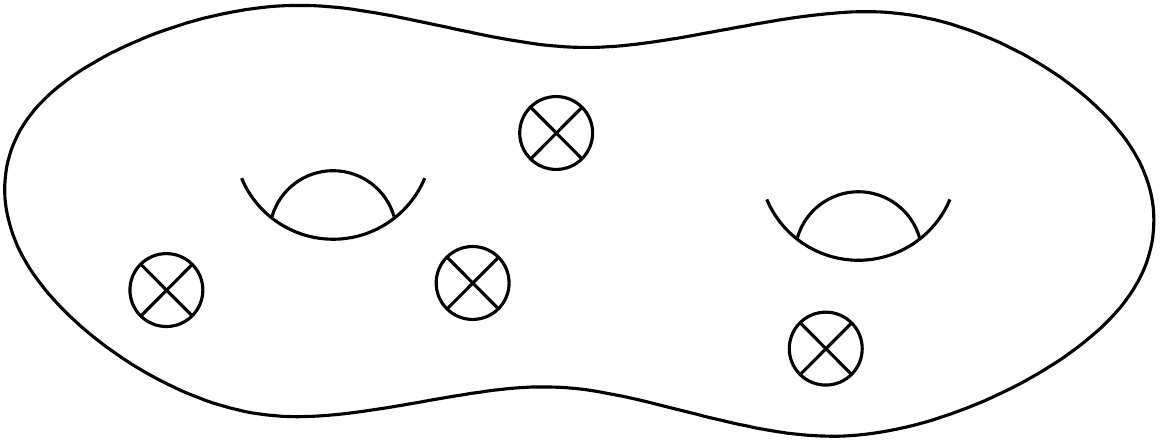_t}}
\end{minipage}
\begin{minipage}[t]{0.4\textwidth}
open-closed SFT\\\vspace{2mm}
\resizebox{5cm}{!}{
\input{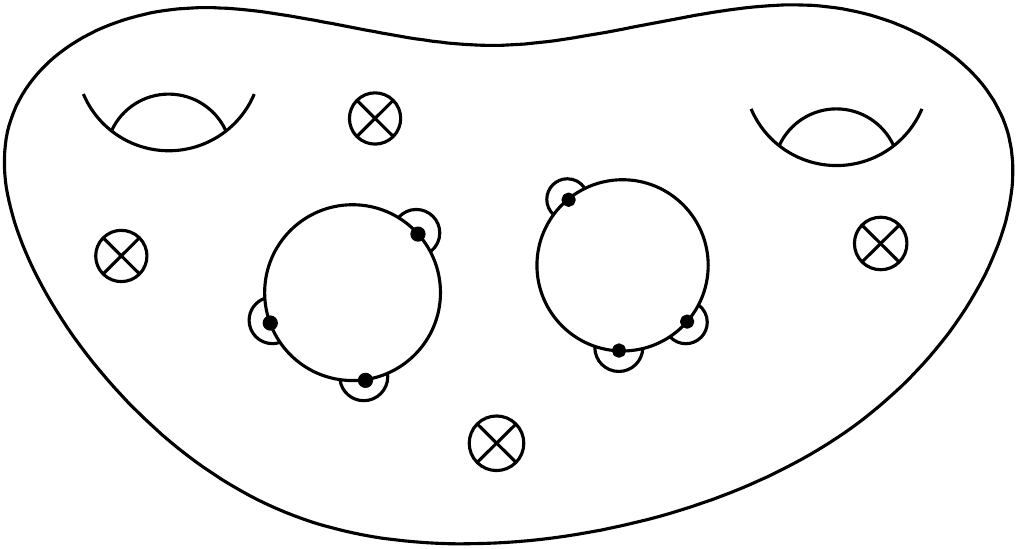_t}}
\end{minipage}
\caption{surfaces in quantum SFT}
\end{figure}

The algebraic structure of quantum open-closed SFT can be described in a similar way as in 'classical' open-closed SFT. The surfaces with open and closed string punctures define a morphism, but in this case an $IBL_\infty$-morphism rather than a $L_\infty$-morphism. On the open string side of the OCHA (\ref{eq:ocha}) we had the differential graded Lie algebra $(\op{Coder}^{cycl}(TA_o),d_h,[\cdot,\cdot])$, but note that due to the isomorphism $\op{Coder}^{cycl}(TA_o)\stackrel{\pi_1}\cong \op{Hom}^{cycl}(TA_o,A_o)\stackrel{\mathrm{\omega_o}}\cong \op{Hom}^{cycl}(TA_o,R)$ the Hochschild differential $d_h$ and the Gerstenhaber bracket $[\cdot,\cdot]$ have their counterparts defined on $\mathcal{A}_o\defineL \op{Hom}^{cycl}(TA_o,R)$ (see e.g. \cite{qocha} for more details), which we will also denote by 
\be\no
d_h: \mathcal{A}_o\to \mathcal{A}_o 
\ee
and 
\be\no
[\cdot,\cdot]: \mathcal{A}_o^{\w 2}\to \mathcal{A}_o \;\text{.}
\ee
(The Gerstenhaber bracket is now symmetric in the inputs and has degree one, since $\omega_o$ has degree minus one.) In the following we will work with the space $\mathcal{A}_o$, which is called the cyclic Hochschild complex, rather than with $\op{Coder}^{cycl}(TA_o)$. In order to take account of the open string BV operator $\Delta^{BV}_o$, we have to supplement the differential graded Lie algebra $(\mathcal{A}_o,d_h,[\cdot,\cdot])$ by an additional operation
\be\no
\delta:\mathcal{A}_o \to \mathcal{A}_o^{\w 2} \;\text{,}
\ee
defined by 
\begin{align}\label{eq:delta}
(\delta f)&(a_1,\dots,a_n)(b_1,\dots,b_m)\\\no
\defineL&(-1)^f\sum_{i=1}^{n}\sum_{j=1}^{m}(-1)^\epsilon f(e_k,a_{i},\dots,a_n,a_1,\dots,a_{i-1},e^k,b_{j},\dots,b_m,b_1,\dots,b_{j-1})\;\text{,}
\end{align}
where $(-1)^\epsilon$ denotes the Koszul sign, $\{e_i\}$ is a basis of $A_o$ and $\{e^i\}$ is the corresponding dual basis satisfying $\omega_o({}_ie,e^j)={}_i\delta^j$ (see \cite{Witt super,qocha} for the sign conventions for left and right indices). In \cite{Chen liebi,Cieliebak ibl} it has been shown that $(\mathcal{A}_o,d_h,[\cdot,\cdot],\delta)$ defines an involutive Lie bialgebra, a special case of an $IBL_\infty$-algebra. In the language of $IBL_\infty$-algebras this is equivalent to the statement that
\be\no
\LL_o\defineL \liftd{d}_h+\widehat{[\cdot,\cdot]}+\hbar\, \liftd{\delta}\; \in \coder(\mathcal{A}_o,\hbar)
\ee
squares to zero. Now for $b\ge1$ and $g\ge0$, we define maps $n^{b,g}\in \op{Hom}(SA_c,\mathcal{A}_o^{\w b})$ by
\be\no
n^{b,g}=\begin{cases}\sum_{n=1}^\infty\sum_m f^{1,0}_{n,m} & \text{,}\; b=1,\;g=0 \\ \sum_{n=0}^\infty\sum_m f^{b,g}_{n,m} &\text{,}\; \text{else}\end{cases}  \;\text{.}
\ee
We exclude $f^{1,0}_{0,n}$ in the sum for $b=1$, $g=0$, since it is already taken into account via the Hochschild differential $d_h$. 
Finally, the algebraic structure of quantum open-closed SFT can be summarized in the following way:
The open-closed vertices $n^{b,g}$ define an $IBL_\infty$-morphism from the loop homotopy algebra of closed strings $A_c$ to the involutive Lie bialgebra on the cyclic Hochschild complex of open strings $\mathcal{A}_o$
\be\no
(A_c,\LL_c)\xrightarrow[]{IBL_\infty-\text{morphism}} (\mathcal{A}_o,{\LL}_o) \;\text{.}
\ee
That is we have 
\be\label{eq:qocha1}
\lifte{\frak{n}}\circ \LL_c=\LL_o\circ\lifte{\frak{n}}
\ee
where 
\be\no
\frak{n}=\sum_{b=1}^\infty\sum_{g=0}^\infty \hbar^{b+g-1}\, n^{b,g} \;\text{.}
\ee
This is the quantum open-closed homotopy algebra introduced in \cite{qocha}. 
Equation (\ref{eq:qocha1}) can be recast, such that the five distinct sewing operations in open-closed SFT become apparent \cite{qocha}:
\begin{align}\label{eq:qocha2}
\frak{n}&\circ \LL_c + \frac{\hbar}{2}\bracketi{\frak{n}\circ \liftd{e}_i\w \frak{n}\circ \liftd{e}^i}\circ\Delta \\\no
&= \LL_o \circ \frak{n} +\inv{2}\widehat{[\cdot,\cdot]}\circ \bracketi{\frak{n}\w \frak{n}}\circ\Delta -\bracketi{(\widehat{[\cdot,\cdot]}\circ\frak{n})\w\frak{n}}\circ\Delta \;\text{.}
\end{align}
In equation (\ref{eq:qocha2}), $e_i$ and $e^i$ denote a basis and corresponding dual basis of $A_c$ w.r.t. the symplectic structure $\omega_c$. Obviously we recover the OCHA of equation 
(\ref{eq:ochaex}) in the limit $\hbar\to 0$.

In \cite{Markl loop} it has been shown that the closed string loop homotopy algebra (\ref{eq:loop}) defines an algebra over the Feynman transform of $\text{Mod}(Com)$.
Similarly, it is expected that the QOCHA of open-closed SFT actually describes an algebra over the Feynman transform of a (two colored) operad corresponding to the moduli spaces of \cite{Liu OCmoduli}.
For more information in this direction see \cite{Voronov BVopenclosed, Harrelson openclosed}.

\section{Decomposition Theorem for closed String Loop Algebra}\label{sec:dec}
In the previous section we reformulated the BV master equation for the string vertices as axioms of some homotopy algebra. The connection between the S-matrix of SFT and the perturbative string amplitudes is then established via the minimal model theorem. Consider for example classical open SFT, and denote its corresponding $A_\infty$-algebra by $(A,M)$. The minimal model theorem states that the cohomology $H(A,d)$ of $A$ with respect to the differential $d=\pi_1\circ M \circ i_1$ admits the structure of an $A_\infty$-algebra, denoted by $(H(A,d),\tilde{M})$, with vanishing differential $\pi_1\circ \tilde{M}\circ i_1=0$. Furthermore, $(H(A,d),\tilde{M})$ and $(A,M)$ are quasi-isomorphic, i.e. there is an $A_\infty$-quasi-isomorphism $\tilde{F}:(H(A,d),\tilde{M})\to(A,M)$. Note that in SFT the differential $d$ is the BRST operator and the BRST cohomology $H(A,d)$ represents the physical states. 

The construction of the minimal model is, in fact, identical to the construction of tree level S-matrix amplitudes via Feynman rules: First one chooses a certain gauge, such that we can define a propagator. With the aid of the propagator we construct all possible trees with vertices labeled $m_n\defineL \pi_1\circ M \circ i_n$ and internal lines labeled by the propagator. The collection of all these trees, restricted to the cohomology $H(A,d)$, then defines the multilinear maps $\tilde{m}=\pi_1\circ\tilde{M}$. Thus $\tilde{m}$ represents the S-matrix amplitudes, and moreover the $A_\infty$ relations for the S-matrix elements, $\tilde{M}^2=0$, can be identified as the Ward identities. 

The relation between the minimal model and S-matrix amplitudes in classical open SFT is discussed in \cite{Kajiura open1,Kajiura open2}. In classical closed SFT, the algebraic structure induced by the S-matrix elements on the BRST cohomology is accordingly that of an $L_\infty$-algebra \cite{ZwiebachWitten}, and the minimal model in the context of $L_\infty$-algebras is discussed in \cite{Fukaya,Kontsevich}. Furthermore there is a generalization of the minimal model theorem in the form of the decomposition theorem, which states that an $A_\infty$/$L_\infty$-algebra is isomorphic to the direct sum of a linear contractible part and a minimal part \cite{Kontsevich, Kajiura open1, Kajiura open2}.

In this section we are concerned with analogous statements in quantum closed SFT. The Ward identities of quantum closed SFT can be interpreted as the loop homotopy algebra axioms \cite{Zwiebach closed,Verlinde}. In chapter \ref{sec:infty}, we pointed out that loop homotopy algebras are indeed algebras over the Feynman transform of a modular operad \cite{Markl loop}, and the minimal model theorem corresponding to such algebras has been established in \cite{Lazarev1, Lazarev2}. The explicit construction of such minimal models resembles that in the case of $A_\infty$-algebras, but where one has to consider graphs (allowing loops) instead of trees. 

In the first subsection we will review what kind of extra structure is needed in order to define the minimal model/decomposition model, and the relation of these extra structures to the notion of gauge fixing in SFT. The second subsection is devoted to the proof of the decomposition theorem for loop homotopy algebras and finally we derive thereof the minimal model theorem. Indeed we will need the decomposition theorem, rather than the minimal model theorem, for the considerations in section \ref{sec:uni}. Besides an explicit construction of the decomposition model, we also give an explicit construction of the $IBL_\infty$-isomorphism from the initial loop homotopy algebra to its decomposition model.

\subsection{Hodge decomposition and gauge fixing}
Let $A$ be a graded module endowed with an odd symplectic structure $\omega$ of degree minus one and a compatible differential $d:A\to A$ of degree one, i.e.
\be\no
d^2=0 \mand \omega(d,\,\cdot\,)+\omega(\,\cdot\,,d)=0 \;\text{.}
\ee
\begin{Def}
A pre Hodge decomposition of $A$ is a map $h:A\to A$ of degree minus one which is compatible with the symplectic structure and squares to zero. 
\end{Def}
For a given pre Hodge decomposition of $A$, we define the map 
\be\no
P=1+dh+hd \;\text{,}
\ee
which obviously satisfies $Pd=dP$ and $Ph=hP$. 
\begin{Def}
A Hodge decomposition of $A$ is a pre Hodge decomposition which additionally satisfies $hdh=-h$.
\end{Def}
Let $h$ be a Hodge decomposition of $A$ and define $P_U=-hd$ and $P_T=-dh$. Then the following properties are satisfied:
\be\no
P^2=P \sep
P_U^2=P_U \sep
P_T^2=P_T\;\text{.}
\ee
That is $P,P_U,P_T$ are projection maps and $A$ decomposes into the corresponding projection subspaces $A_P\oplus A_U \oplus A_T$. Furthermore we have $Ph=hP=0$.
\begin{Def}
A Hodge decomposition of $A$ is called harmonious if $dhd=-d$.
\end{Def}
For a harmonious Hodge decomposition the additional feature compared to a Hodge decomposition is $Pd=dP=0$. Furthermore we have $A_P\perp A_U\oplus A_T$, $A_U\perp A_U$ and $A_T\perp A_T$.
These definitions are borrowed from \cite{Lazarev2}.

Let us now elucidate how the algebraic structures just described come into play in SFT. Let $d$ be the BRST differential and $A$ the space of string fields. Gauge fixing is required to obtain a well defined path integral, which amounts to fixing a representative for every element of the cohomology $H(A,d)$. More precisely, the gauge fixing determines a map
\be\no
i:H(A,d)\to A \; \text{,}
\ee
which maps an element of the cohomology to its corresponding representative. We will call $i$ the inclusion map.
We also have the projection map 
\be\no
\pi: A\to H(A,d) \;\text{.}
\ee
Obviously, the map $P\defineL i\circ\pi:A\to A$ satisfies $P^2=P$ and the image $A_P$ of $P$ is isomorphic to $H(A,d)$. That is $A_P$ represents the physical states. 
Moreover $P$ is a chain map, i.e. $Pd=dP=0$, and its induced map on cohomology is the identity map. This implies that $P$ is homotopic to $1$, i.e. there is a map $h:A\to A$ of degree one such that
\be\no
P-1 = hd+dh\;\text{.}
\ee 
Note that $P^2=P$ implies $h^2=0$. Physically we can identify $h$ as the propagator corresponding to the chosen gauge. We demand $hP=Ph=0$, which means that we set the propagator to zero on the space of physical states. The subspace $A_U$ corresponding to the projection map $P_U=-hd$ represents the unphysical states, i.e. the states not annihilated by $d$, and the subspace $A_T$ represents the space of trivial states, i.e. $d$ exact states. Thus we can summarize that choosing a gauge in SFT determines a harmonious Hodge decomposition, which decomposes the state space into physical, unphysical and trivial states \cite{Kajiura open1,Kajiura open2}. When dealing with a pre Hodge decomposition, we will call the images of $P$, $-dh$, $-hd$ the physical space, trivial space, unphysical space as well.

In the next subsection we will see that the extra data required to construct a decomposition model is just a pre Hodge decomposition, whereas we need a harmonious Hodge decomposition to construct a minimal model.
 
\subsection{Decomposition theorem of loop homotopy algebra}
Let $(A,\LL)$ be a loop homotopy algebra, i.e. 
\be\label{eq:loopII}
\LL=\sum \hbar^g L^g +\hbar \Omega^{-1} \;\text{,}
\ee
where $L^g=\liftd{l^g}\in \op{Coder}^{cycl}(SA)$ and $\Omega^{-1}=\liftd{\omega^{-1}}\in \op{Coder}^2(SA)$ is the lift of the inverse of the odd symplectic structure (see equation (\ref{eq:loop})). 
We define $l_q\defineL\sum_g \hbar^g l^g$ and $l_{cl}\defineL l^0$, where the subscripts indicate quantum and classical respectively. The differential on $A$ is given by $d=l_{cl}\circ i_1$. Furthermore we abbreviate the collection of multilinear maps without the differential by $l_q^\ast\defineL l_q-d$ and $l_{cl}^\ast\defineL l_{cl}-d$. 

In appendix \ref{app1} we introduced the lifting map, which lifts multilinear maps to a coderivations, but for notational convenience we will denote this map by $D$ rather than a hat in the following. With these conventions equation (\ref{eq:loopII}) reads
\be\no
\LL= D(d+ l_q^\ast + \hbar \omega^{-1}) \;\text{.}
\ee 
The loop homotopy algebra axioms are summarized by $\LL^2=0$ and can be recast to
\be\label{eq:mainid}
d\circ l^\ast_q+l^\ast_q\circ D(l^\ast_q)+l^\ast_q\circ D(d) + l^\ast_q\circ D(\hbar \omega^{-1})=0 \;\text{,}
\ee
and
\be\label{eq:cyclicity}
l^{\ast}_q\circ D(e_i)\w e^i =0 \;\text{,}
\ee
where $\{e_i\}$ and $\{e^i\}$ denote a basis and corresponding dual basis of $A$ w.r.t. the symplectic structure $\omega$, that is $\omega^{-1}=\inv{2}e_i\w e^i$.
Equation (\ref{eq:mainid}) is called the main identity \cite{Zwiebach closed,Markl loop} whereas equation (\ref{eq:cyclicity}) states cyclicity of the maps $l^\ast_q$, i.e. that $\omega(l^g_n,\,\cdot)$ enjoys full symmetry in all $n+1$ inputs (see appendix \ref{app1}). The cyclicity condition (\ref{eq:cyclicity}) is essentially saying that there is actually no distinction between outputs and inputs.

To construct a decomposition model of the loop homotopy algebra (\ref{eq:loopII}), we additionally need the data of a pre Hodge decomposition $h:A\to A$. Again we define $P=1+dh+hd$, and in addition we introduce 
\be\label{eq:metric}
g\defineL -\omega\circ d \mand g^{-1}\defineL h\circ\omega^{-1} \;\text{,}
\ee
where the symplectic structure $\omega$ and its inverse $\omega^{-1}$ are considered as a map from $A$ to $A^\ast$ and $A^\ast$ to $A$, respectively. Since $d$ and $h$ are compatible with the symplectic structure, $g$ is a symmetric map and $g^{-1}\in A^{\w 2}$, each of degree zero. Assume for a moment that $h$ defines indeed a harmonious Hodge decomposition, then we saw that the full space $A$ splits into $A_P\oplus A_U \oplus A_T$, where $A_P$, $A_U$, $A_T$ represents the physical, unphysical, trivial space respectively. In this case $g$ is non-vanishing only on the unphysical space $A_U$ and $g^{-1}$ defines its inverse upon restricting to $A_U$, that is $g$ defines a metric on the unphysical space. 

In the context of $L_\infty$-algebras the decomposition theorem is proven by constructing trees from $l_{cl}^\ast$ and $h$ \cite{Fukaya}. There is a nice way of generating these trees, by employing the tools developed in appendix \ref{app1} \cite{Kajiura open1, Kajiura open2}: Consider trees where the root and the internal lines are labelled by the propagator $h$, the vertices by $l_n=l_{cl}^\ast\circ i_n$ and the leaves by the identity map $1$. The collection of all these trees $\mathsf{T}_{cl}:SA\to A$, is defined recursively via
\be\label{eq:trees}
\mathsf{T}_{cl}=h\circ l^\ast_{cl}\circ e^{1+\mathsf{T}_{cl}} \mand \mathsf{T}_{cl}\circ i_1 =0 \;\text{,}
\ee
where $e$ is the lifting map of multilinear maps to cohomomorphisms (see appendix \ref{app1}).
In figure \ref{fig:trees}, we depict the first few terms of $\mathsf{T}_{cl}$ according to the number of inputs.  

\begin{figure}[h] \centering
\begin{minipage}[b]{0.7\textwidth}
\scalebox{0.9}{
\input{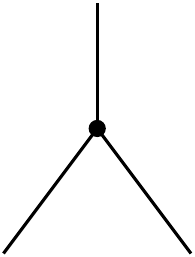_t}}
\end{minipage}
\\
\vspace{0.6cm}
\begin{minipage}[b]{0.7\textwidth}
\scalebox{0.9}{
\input{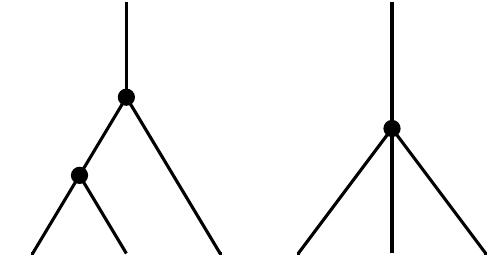_t}}
\end{minipage}
\\
\vspace{0.6cm}
\begin{minipage}[b]{0.7\textwidth}
\scalebox{0.9}{
\input{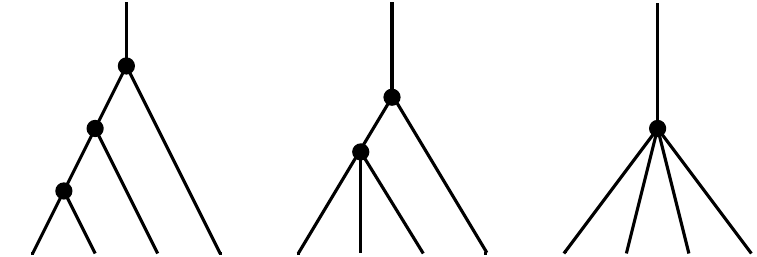_t}}
\end{minipage}
\caption{Trees constructed from $l_{cl}^\ast$ and $h$, where $l_n=l_{cl}^\ast\circ i_n$. Although not explicitly indicated, the inputs are understood to be symmetrized.}
\label{fig:trees}
\end{figure}

Likewise, for an arbitrary linear map $x:A\to A$ we define trees in the same way as in equation (\ref{eq:trees}) but replacing the root by $x$, that is
\be\label{eq:treesroot}
\overset{(x)}{\mathsf{T}}_{cl}\defineL x\circ l^\ast_{cl}\circ e^{1+\mathsf{T}_{cl}} \;\text{.}
\ee

As anticipated in the beginning of this section, we have to consider graphs to prove the decomposition theorem for loop homotopy algebras. Graphs are essentially trees with loops attached. 
The strategy is thus to start with trees as in the $L_\infty$ case. Attaching loops can then be implemented neatly by composing the trees with an appropriate cohomomorphism. So let us first define trees constructed recursively from $l^\ast_q$ and $h$ via
\be\label{eq:treesq}
\mathsf{T}_{q}=h\circ l^\ast_{q}\circ e^{1+\mathsf{T}_{q}} \mand \mathsf{T}_{q}\circ i_1 =0 \;\text{.}
\ee
Consider now the cohomomorphism
\be\no
E(\hbar g^{-1})\defineL e^{1+\hbar g^{-1}} \in \cohom(SA,SA,\hbar) \;\text{,}
\ee
where $e$ is the lifting map and $g^{-1}$ is the inverse metric on the unphysical space defined in equation (\ref{eq:metric}).
Let $\{u_i\}$ be a basis of the unphysical space and $\{u^{i}\}$ its dual basis w.r.t. $g$, i.e. 
\be\no
g({}_iu,u^j)={}_i\delta^j \;\text{,}
\ee
where we use the sign conventions of \cite{qocha,Witt super} that relate left indexed objects with right indexed objects. 
In terms of basis and dual basis, we can express the inverse metric as
\be\no
g^{-1}=\inv{2}u_i\w {}^iu \;\text{.}
\ee
Physically $g^{-1}$ is interpreted as a loop, it connects two inputs by propagating the unphysical degrees of freedom.
The cohomomorphism $E(\hbar g^{-1})$ is then the map that attaches loops in all possible ways. Thus $E(\hbar g^{-1})$ is the map that we have to compose with 
the trees $\mathsf{T}_q$ to obtain graphs. Since we are actually interested in graphs with many outputs (directed connected graphs), we define the collection of all these graphs $\B{\Gamma}$ by
\be\label{eq:graphs}
e^{1+\hbar g^{-1} + \B{\Gamma}}= e^{1+\mathsf{T}_q}\circ E(\hbar g^{-1}) \;\text{.}
\ee 
Note that since $E(\hbar g^{-1})$ acts on the collection of disconnected trees $e^{1+\T_q}$, every $g^{-1}$ either generates a loop or increases the number of outputs by one.
In figure \ref{fig:graph} we depict a typical graph generated in that way.  Upon amputating the loops $g^{-1}$, every graph reduces to a collection of connected trees.

\begin{figure}[h] \centering
\begin{minipage}[b]{0.9\textwidth}
\scalebox{0.9}{
\input{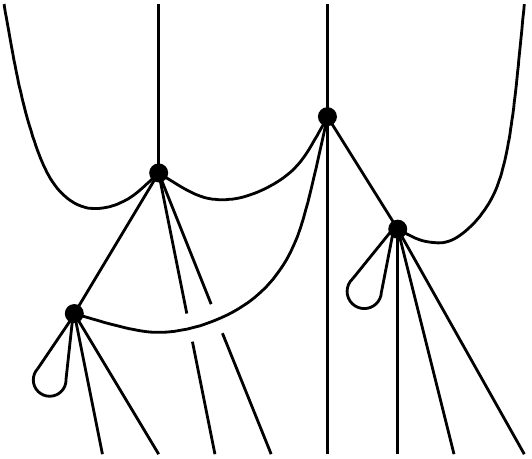_t}}
\end{minipage}
\caption{Graph constructed from $h$, $l^\ast_q$ and $\alpha=g^{-1}$. The curved lines represent the inverse metric $g^{-1}$.}
\label{fig:graph}
\end{figure}

A graph comes with a certain power of $\hbar$, which is the number of outputs minus one plus the number of loops (the first Betti number of the graph) plus the powers of $\hbar$ from the vertices.
The number of loops plus the powers of $\hbar$ from the vertices define the genus of the graph. Thus we have $\B{\Gamma}\in \bigoplus_{n=1}^\infty \hbar^{n-1}\op{Hom}(SA,\Sigma^n A)$, with
\be\no
\B{\Gamma}=\sum_{n=1}^\infty\sum_{g=0}^\infty \hbar^{n+g-1} \B{\Gamma}^{n,g} \;\text{,}
\ee
where $\B{\Gamma}^{n,g}$ represents the collection of graphs of genus $g$ with $n$ outputs and 
\be\label{eq:decmorph}
\bar{\FF}\defineL e^{1+\hbar g^{-1}+\B{\Gamma}} \in \cohom(SA,SA,\hbar)\;\text{.}
\ee
Now we are ready to state the decomposition theorem for loop homotopy algebras.

\begin{Thm}\label{thm:dec}
Let $(A,\LL=D(d+l^\ast_q +\hbar \omega^{-1}))$ be a loop homotopy algebra. For a given pre Hodge decomposition $h$, there is an associated loop homotopy algebra 
\be\label{eq:deccoder}
\bar{\LL}=D(d+ \overset{(P)}{\mathsf{T}}_q\circ E(\hbar g^{-1}) +\hbar \bar{\omega}^{-1}) \;\text{,}
\ee 
where $\bar{\omega}^{-1}=P^{\w2}(\omega^{-1})$ is the restriction of $\omega^{-1}$ to the physical space and $\overset{(P)}{\mathsf{T}}_q\circ E(\hbar g^{-1})$ represents the graphs with a single output labeled by $P$. Furthermore $\bar{\FF}=e^{1+\hbar g^{-1}+\B{\Gamma}}$ (see equation (\ref{eq:decmorph})) defines an $IBL_\infty$-isomorphism from $(A,\bar{\LL})$ to $(A,\LL)$. $d$ is called the linear contractible part and $\overset{(P)}{\mathsf{T}}_q\circ E(\hbar g^{-1}) +\hbar \bar{\omega}^{-1}$ the minimal part.
\end{Thm}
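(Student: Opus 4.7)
The plan is to verify the intertwining relation $\bar{\FF}\circ\bar{\LL}=\LL\circ\bar{\FF}$, from which the loop homotopy axiom $\bar{\LL}^2=0$ will follow by invoking $\LL^2=0$ together with invertibility of $\bar{\FF}$. Since a coderivation on $SA$ is determined by its projection through $\pi_1$, I would first reduce the problem to an identity of multilinear maps. To handle the tree-level content, set $\FF_q\defineL e^{1+\mathsf{T}_q}$ and use the defining recursion $\mathsf{T}_q=h\circ l^\ast_q\circ e^{1+\mathsf{T}_q}$ together with the pre Hodge identity $1=P-dh-hd$ to decompose
$$l^\ast_q\circ\FF_q \;=\; P\circ l^\ast_q\circ\FF_q \;-\; d\circ\mathsf{T}_q \;-\; hd\circ l^\ast_q\circ\FF_q.$$
The first term on the right is the prospective new vertex $\overset{(P)}{\mathsf{T}}_q$; the second matches $D(d)$ acting through $\FF_q$ by the Leibniz rule for coderivations on cohomomorphisms; the third is reabsorbed into the other terms by applying $h$ to the main identity (\ref{eq:mainid}) composed with $e^{1+\mathsf{T}_q}$, which produces precisely the combinations already present on the right-hand side. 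This establishes the intertwining modulo the contributions involving $\hbar\omega^{-1}$.

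Next, I would attach loops by composing on the right with the cohomomorphism $E(\hbar g^{-1})=e^{1+\hbar g^{-1}}$, so that $\bar{\FF}=\FF_q\circ E(\hbar g^{-1})=e^{1+\hbar g^{-1}+\B{\Gamma}}$ by the very definition (\ref{eq:graphs}) of $\B{\Gamma}$. The key decomposition is $\omega^{-1}=P^{\w 2}\omega^{-1}+(1-P^{\w 2})\omega^{-1}=\bar{\omega}^{-1}+\text{(terms containing $dh$ or $hd$)}$. The correction terms, when combined with the remaining action of $l^\ast_q$, produce exactly the loops generated by $g^{-1}=h\circ\omega^{-1}$ under $E(\hbar g^{-1})$: whenever $\hbar\Omega^{-1}$ contracts two legs of a vertex inside $\FF_q$, the $(1-P^{\w2})$ piece becomes a $g^{-1}$ loop attached to that vertex, while the $\bar{\omega}^{-1}$ piece is a genuinely new output and therefore must appear explicitly in $\bar{\LL}$. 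Summing over all insertion points and using cyclicity (\ref{eq:cyclicity}) of $l^\ast_q$ matches combinatorial factors on the two sides and completes the intertwining.

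Finally, I would verify that $\bar{\FF}$ is an $IBL_\infty$-isomorphism and that the new structure maps are cyclic. Invertibility is routine: filtering $\coder(SA,\hbar)$ and $\cohom(SA,SA,\hbar)$ by the total number of outputs plus $\hbar$-order, the nilpotent part $\bar{\FF}-\op{id}$ has strictly positive filtration, so $\bar{\FF}^{-1}$ is constructed as a formal geometric series order by order. Cyclicity of $\overset{(P)}{\mathsf{T}}_q\circ E(\hbar g^{-1})$ and of $\bar{\omega}^{-1}$ with respect to $\omega$ follows from cyclicity of $l^\ast_q$, graded symmetry of $g^{-1}$, and the compatibility of $d$ and $h$ with $\omega$. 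The main obstacle I expect is the graph-theoretic bookkeeping in the second step: one must verify, graph by graph, that the action of $\LL$ on an element of $\B{\Gamma}$ reproduces the sum over either cutting an internal propagator $h$ (giving the $\bar{\LL}$ action through $\overset{(P)}{\mathsf{T}}_q$) or cutting a loop edge $g^{-1}$ (giving the $\hbar\bar{\omega}^{-1}$ action), with the correct Koszul signs and multiplicities. A convenient packaging device is to treat the recursions for $\mathsf{T}_q$ and $g^{-1}$ as a simultaneous flow in an auxiliary parameter and check the intertwining by differentiating, thereby reducing the sign tracking to the already-established main identity and cyclicity.
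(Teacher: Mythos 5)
Your proposal follows essentially the same route as the paper's proof: you establish the intertwining relation $\LL\circ\bar{\FF}=\bar{\FF}\circ\bar{\LL}$ by combining the recursion for $\mathsf{T}_q$ with the main identity (\ref{eq:mainid}) and $P=1+dh+hd$ at tree level, then commute $E(\hbar g^{-1})$ through using the relation between $\omega^{-1}$, $\bar{\omega}^{-1}$ and $g^{-1}$ (your $P^{\w 2}$ decomposition is the same identity the paper packages as $D(d)(g^{-1})=\bar{\omega}^{-1}-\omega^{-1}$), invoke cyclicity (\ref{eq:cyclicity}) for the combinatorial bookkeeping, and finally deduce $\bar{\LL}^2=0$ from $\LL^2=0$ and invertibility of $\bar{\FF}$. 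This is the paper's argument in all essentials; your explicit filtration argument for invertibility and the cyclicity check are minor additions the paper leaves implicit.
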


\begin{proof}
Since we expressed graphs as the composition of trees with the cohomomorphism $E(\hbar g^{-1})$, the proof can be traced back to the level of trees. In the following we will leave out the subscript $q$, i.e. $l^\ast=l^\ast_q$ and $\T_q=\T$. In a first step we show
\be\label{eq:treesid}
\overset{(hd)}{\T}+\T\circ D(d)+\T\circ D(\overset{(P)}{\T})+\T\circ D^{con}(\hbar \omega^{-1})=0 \;\text{,}
\ee
where $\T\circ D^{con}(\hbar \omega^{-1})$ means that we consider only those terms where $\omega^{-1}$ acts on one vertex and not those where $\omega^{-1}$ connects two vertices. We prove equation (\ref{eq:treesid}) inductively: Using the main identity (\ref{eq:mainid}) and $P=1+dh+hd$, we get
\begin{align}\label{eq:treesit}
\overset{(hd)}{\T}&=h\circ d\circ l^\ast\circ F = -h\circ l^\ast \circ D(d+l^\ast+ \hbar \omega^{-1})\circ F \\\no
&= -h\circ l^\ast \circ \bracketii{\bracketi{d+d\circ\T + l^\ast\circ F +\hbar \omega^{-1}}\w F}\circ \Delta \\ \no
&=-h\circ l^\ast \circ \bracketii{\bracketi{d+ \overset{(dh)}{\T}+\overset{(1)}{\T}+\hbar\omega^{-1}}\w F}\circ \Delta \\ \no
&=-h\circ l^\ast \circ \bracketii{\bracketi{d+ \overset{(P)}{\T}+\hbar \omega^{-1}- \overset{(hd)}{\T}}\w F}\circ \Delta  \;\text{,}
\end{align}
where $F=e^{1+\T}$. In equation (\ref{eq:treesit}), we use identities like (see appendix \ref{app1} for more details)
\be\no
D(l^\ast)\circ F=(l^\ast \w \op{id})\circ \Delta \circ F=(l^\ast\circ F \w F)\circ \Delta \;\text{.}
\ee
Upon iterating equation (\ref{eq:treesit}), we finally obtain equation (\ref{eq:treesid}).

Our strategy will be to first show $\LL\circ \bar{\FF}=\bar{\FF}\circ\bar{\LL}$. We start with calculating the left hand side:
\begin{align}\label{eq:left}
\LL\circ \bar{\FF}&=D(d)\circ F\circ E(\hbar g^{-1})+ D(l^\ast)\circ F\circ E(\hbar g^{-1})+D(\hbar \omega^{-1})\circ F\circ E(\hbar g^{-1}) \\\no
&=\bracketii{\bracketi{d+\overset{(dh)}{\T}+\overset{(1)}{\T}+\hbar \omega^{-1}}\w F}\circ \Delta \circ E(\hbar g^{-1}) \\\no
&=\bracketii{\bracketi{d+\hbar \omega^{-1}+\overset{(P)}{\T}-\overset{(hd)}{\T}}\w F}\circ \Delta \circ E(\hbar g^{-1}) 
\end{align}
On the right hand side we have:
\be\label{eq:right}
\bar{\FF}\circ \bar{\LL}=F\circ E(\hbar g^{-1})\circ D(d+\overset{(P)}{\T}\circ E(\hbar g^{-1})+\hbar\bar{\omega}^{-1})
\ee
Let us consider the individual terms one by one. From the definition of $g^{-1}$ it follows that 
\be\no
D(d)(g^{-1})=d(d_i)\w {}^id=\bar{\omega}^{-1}-\omega^{-1} \;\text{,}
\ee
and thus 
\be\no
E(\hbar g^{-1})\circ D(d)= D(d)\circ E(\hbar g^{-1})+ D(\hbar \omega^{-1})\circ E(\hbar g^{-1})-D(\hbar \bar{\omega}^{-1})\circ E(\hbar g^{-1})\;\text{.}
\ee
Therefore the first plus the third term of equation (\ref{eq:right}) yield
\begin{align}\no
F&\circ E(\hbar g^{-1})\circ D(d+\hbar\bar{\omega}^{-1})=F\circ D(d+\hbar \omega^{-1})\circ E(\hbar g^{-1})\\\no
&=\bracketii{\bracketi{d+\T\circ D(d)+ \hbar \omega^{-1}+\hbar e_i\w\T\circ D(e^i)+ \frac{\hbar}{2}\T\circ D(e_i)\w \T\circ D(e^i)  \\\no
& \hspace{9mm}+ \T\circ D(\hbar \omega^{-1})}\w F}\circ \Delta \circ E(\hbar g^{-1}) \;\text{.}
\end{align}
Using the cyclicity property (\ref{eq:cyclicity}), we conclude that 
\be\no
\hbar e_i\w\T\circ D(e^i)= \frac{\hbar}{2}\T\circ D(e_i)\w \T\circ D(e^i)=0 \;\text{,}
\ee
and
\be\no
\T\circ D(\hbar \omega^{-1})=\T\circ D^{con}(\hbar \omega^{-1})\;\text{.}
\ee
Similarly, cyclicity implies that the second term of equation (\ref{eq:right}) reduces to
\begin{align}\no
F\circ& E(\hbar g^{-1})\circ D(\overset{(P)}{\T}\circ E(\hbar g^{-1}))= F\circ D(\overset{(P)}{\T})\circ E(\hbar g^{-1})\\\no
&=\bracketii{\bracketi{\overset{(P)}{\T} + \T\circ D(\overset{(P)}{\T})}\w F}\circ\Delta \circ E(\hbar g^{-1})\;\text{.}
\end{align}
Altogether we finally get
\be\no
\bar{\FF}\circ\bar{\LL}=\bracketii{\bracketi{d+\hbar \omega^{-1}+\overset{(P)}{\T} +\T\circ D(d)+ \T\circ D(\overset{(P)}{\T})+\T\circ D^{con}(\hbar\omega^{-1})}\w F}\circ\Delta \circ E(\hbar g^{-1})\;\text{,}
\ee
and $\LL\circ \bar{\FF}=\bar{\FF}\circ\bar{\LL}$ follows then directly form equation (\ref{eq:treesid}).

The second part of the proof, $\bar{\LL}^2=0$, follows directly from $\LL\circ \bar{\FF}=\bar{\FF}\circ\bar{\LL}$. Note that $\bar{\FF}$ is an $IBL_\infty$-isomorphism, which implies that there is a unique inverse $\bar{\FF}^{-1}$. Thus we have
\be\no
\bar{\LL}^2=\bar{\FF}^{-1}\circ \LL^2\circ \bar{\FF}=0 \;\text{.}
\ee
\end{proof}

\subsection{Minimal model of loop homotopy algebra}
The minimal model theorem follows readily from the decomposition theorem, but in contrast to the decomposition theorem we need a harmonious Hodge decomposition.
\begin{Thm}
Let $(A,\LL=D(d+l^\ast_q +\hbar \omega^{-1}))$ be a loop homotopy algebra. For a given harmonious Hodge decomposition $h$, with corresponding inclusion map $i:H(A,d)\to A$ and projection map $\pi:A\to H(A,d)$, there is an associated loop homotopy algebra on the cohomology $H(A,d)$ 
\be\label{eq:deccoder2}
\tilde{\LL}=D(\overset{(\pi)}{\mathsf{T}}_q\circ E(\hbar g^{-1})\circ I +\hbar \tilde{\omega}^{-1}) \;\text{,}
\ee 
where $\tilde{\omega}^{-1}=\pi^{\w2}(\omega^{-1})$ is the projection of $\omega^{-1}$ to the cohomology $H(A,d)$, $\overset{(\pi)}{\mathsf{T}}_q\circ E(\hbar g^{-1})$ represents the graphs with a single output labeled by $\pi$ and $I=e^i$ is the lift of the inclusion map. Furthermore $\tilde{\FF}=\bar{\FF}\circ I$ defines an $IBL_\infty$-isomorphism from $(H(A,d),\tilde{\LL})$ to $(A,\LL)$. 
\end{Thm}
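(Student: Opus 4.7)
The plan is to bootstrap the minimal model from the decomposition theorem (Theorem~\ref{thm:dec}) by pulling the decomposition model back along the cohomomorphism $I = e^i: SH(A,d) \to SA$. The extra input provided by harmoniousness beyond a generic pre Hodge decomposition is the identity $dP = Pd = 0$, which combined with $P = i \circ \pi$ delivers the two clean relations $d \circ i = 0$ and $\pi \circ h = 0$ that make this pullback well-defined. First I apply Theorem~\ref{thm:dec} to obtain $\bar{\LL} = D(d + \overset{(P)}{\mathsf{T}}_q \circ E(\hbar g^{-1}) + \hbar \bar{\omega}^{-1})$ on $A$ together with the $IBL_\infty$-isomorphism $\bar{\FF}: (A, \bar{\LL}) \to (A, \LL)$.

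Next I observe that every coefficient map of $\bar{\LL}$ other than $d$ factors through $i$: from $P = i\pi$ one reads $\overset{(P)}{\mathsf{T}}_q = i \circ \overset{(\pi)}{\mathsf{T}}_q$, and $\omega$-compatibility of $d$ and $h$ gives $\bar{\omega}^{-1} = P^{\w 2}(\omega^{-1}) = (i \w i)(\tilde{\omega}^{-1})$. Evaluating the coderivation $\bar{\LL}$ on $I(x_1 \cdots x_n) = i(x_1) \cdots i(x_n)$ via the co-Leibniz rule, the $D(d)$ contribution drops out because $d \circ i = 0$, while the two remaining contributions produce symmetric products lying entirely in $\op{Im}(i)$. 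This lets me factor $I$ out on the left and read off the coefficient map of the pulled-back coderivation as precisely the one displayed in (\ref{eq:deccoder2}), so that
\be\no
\bar{\LL} \circ I = I \circ \tilde{\LL}.
\ee

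With this intertwining in hand, the morphism property of $\tilde{\FF} = \bar{\FF} \circ I$ is a two-line check: $\LL \circ \tilde{\FF} = \LL \circ \bar{\FF} \circ I = \bar{\FF} \circ \bar{\LL} \circ I = \bar{\FF} \circ I \circ \tilde{\LL} = \tilde{\FF} \circ \tilde{\LL}$, and the resulting $\tilde{\LL}^2 = 0$ follows either from $\LL^2 = 0$ combined with injectivity of $\tilde{\FF}$ on coderivation lifts, or from a direct restriction argument mirroring the second half of the proof of Theorem~\ref{thm:dec}. The main obstacle I anticipate is the bookkeeping in the intertwining step itself, specifically verifying that precomposing $D(\overset{(P)}{\mathsf{T}}_q \circ E(\hbar g^{-1}))$ by $e^i$ yields on the nose the coderivation lift of $\overset{(\pi)}{\mathsf{T}}_q \circ E(\hbar g^{-1}) \circ I$ on $SH(A,d)$ with the correct signs and symmetry factors, and confirming that the loop insertions $g^{-1} \in A_U^{\w 2}$ do not spuriously interact with the physical inputs $i(x_k) \in A_P$. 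Harmoniousness via $\pi \circ h = 0$ (equivalently $A_U \perp A_P$) is exactly what closes this loophole.
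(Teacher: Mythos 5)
Your proposal is correct and follows essentially the same route as the paper: both bootstrap from Theorem \ref{thm:dec} via the intertwining relation $I\circ\tilde{\LL}=\bar{\LL}\circ I$ (which the paper simply asserts and you derive from $d\circ i=0$, $\overset{(P)}{\T}_q=i\circ\overset{(\pi)}{\T}_q$ and $\bar{\omega}^{-1}=i^{\w 2}(\tilde{\omega}^{-1})$), then chain $\LL\circ\tilde{\FF}=\tilde{\FF}\circ\tilde{\LL}$ exactly as in the text. Your nilpotency argument via injectivity of $I=e^i$ is a harmless variant of the paper's use of $\Pi=e^\pi$ and $\mathcal{P}\circ\bar{\LL}\circ I=\bar{\LL}\circ I$, resting on the same facts.
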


\begin{proof}
From the decomposition theorem we know 
\be\no
\LL\circ\bar{\FF}=\bar{\FF}\circ \bar{\LL} \mand \bar{\LL}^2=0 \;\text{.}
\ee
Furthermore the loop homotopy algebra of the decomposition model is related to the loop homotopy algebra of the minimal model by
\be
\tilde{\LL}=\Pi\circ \bar{\LL}\circ I \sep I\circ \tilde{\LL}= \bar{\LL}\circ I \;\text{,}
\ee
where $\Pi=e^\pi$ is the lift of the projection map. Thus we have
\be\no
\tilde{\FF}\circ\tilde{\LL}=\bar{\FF}\circ I\circ\tilde{\LL}=\bar{\FF}\circ\bar{\LL}\circ I= \LL \circ \bar{\FF}\circ I=\LL\circ \tilde{\FF} \;\text{.}
\ee
Let us denote $\mathcal{P}=e^P$. Using $\mathcal{P}\circ \bar{\LL}\circ I =\bar{\LL}\circ I$, we get
\be\no
\tilde{\LL}^2=\Pi\circ \bar{\LL}\circ \mathcal{P} \circ \bar{\LL}\circ I =\Pi\circ \bar{\LL}^2 \circ I=0 \;\text{.}
\ee
\end{proof}

Finally let us discuss the physical relevance of the minimal model. In the following we abbreviate 
\be\no
\tilde{l}^\ast=\overset{(\pi)}{\mathsf{T}}_q\circ E(\hbar g^{-1})\circ I \;\text{,}
\ee
and thus 
\be\no
\tilde{\LL}=D(\tilde{l}^\ast+\hbar \tilde{\omega}^{-1}) \;\text{.}
\ee
As for the initial loop homotopy algebra, the condition $\tilde{\LL}^2=0$ can be recast into two separate equations, one resembling the main identity (\ref{eq:mainid})
\be\label{eq:mainidmin}
\tilde{l}^\ast\circ D(\tilde{l}^\ast)+\tilde{l}^\ast\circ D(\hbar \tilde{\omega}^{-1})=0 \;\text{,}
\ee
and the other expressing cyclicity (\ref{eq:cyclicity}) with respect to $\tilde{\omega}=\omega\circ i^{\w 2}$, that is
\be\no
\tilde{l}^\ast\circ D(p_i)\w p^i=0 \;\text{,}
\ee
where $\{p_i\}$ denotes a basis of $H(A,d)$ and $\{p^i\}$ denotes its dual basis w.r.t. the symplectic structure $\tilde{\omega}$.

Recall that $\tilde{l}^\ast$ represents the collection of graphs, whose tree lines and loop lines are labelled by $h$ and $g^{-1}=h\circ \omega^{-1}$, respectively (see e.g. figure \ref{fig:graph}). Cyclicity tells us that there is actually no distinction between tree lines and loop lines, this separation is indeed a peculiarity of the formalism. The physical meaningful maps are 
\be\no
\tilde{\omega}(\tilde{l}^\ast_n,\cdot): H(A,d)^{\w n+1}\to \C \;\text{.}
\ee
These are the full quantum S-matrix amplitudes, the sum over all possible Feynman graphs (amputated and restricted to the physical states $H(A,d)$) constructed from the vertices $l^\ast_q$ and the propagator $h$. Finally the main identity (\ref{eq:mainidmin}) summarizes the Ward identities \cite{Zwiebach closed, Verlinde} for the S-matrix amplitudes.

\section{Uniqueness of SFT}
In the previous section we saw that the minimal model theorem is directly related to S-matrix amplitudes. In the following we exploit the more general decomposition theorem and explain its relevance in SFT. With the aid of the decomposition theorem we show in the first subsection that  there is a unique SFT on a given background, compatible with the S-matrix of a given world sheet conformal field theory. In the second subsection we then consider the background independent deformation theory of closed string field theory. Concretely we restrict to deformations which preserve the CFT state space but not the BRST charge and the S-matrix.  Such deformations are described by the Chevalley-Eilenberg cohomology. In particular,  we argue that generic deformations of the closed string vertices are trivial.  This is the closed string analogue of the uniqueness result for open string field theory in \cite{Sachs open-closed}.

\subsection{Fixed Background}\label{sec:uni}
For concretness we present the line of reasoning in the context of quantum closed SFT, but the same conclusion will hold for any bosonic SFT, all we need is actually the decomposition theorem and the concept of RG flow. Consider two string field theories on a fixed string background. The SFTs are determined by a choice of string vertices $\mathcal{V}$ at the geometric level. The world sheet conformal field theory then maps the geometric vertices to the algebraic vertices, preserving the BV structure. As pointed out in the previous sections, consistency requires first that the algebraic vertices define some homotopy algebra and second that the corresponding minimal model coincides with the S-matrix amplitudes of perturbative string theory. Denote the two string field theories by $(A,\LL_0)$ and $(A,\LL_1)$, where $A$ is the (restricted) state space of the world sheet conformal field theory and $\LL_i$ is the loop homotopy algebra defining the SFT. That they are constructed on the same background implies that their BRST differentials and their symplectic structures coincide, i.e.
\be\no
\LL_0=D(d+l^\ast_0+\hbar \omega^{-1}) \mand \LL_1=D(d+l^\ast_1+\hbar \omega^{-1}) \;\text{.}
\ee
Now choose a gauge, such that we can define a propagator $h$, and consider the minimal models corresponding to these two SFTs. Since both SFTs are constructed on the same background, their S-matrix amplitudes are identical and hence their minimal models coincide. Recall that the decomposition model is the sum of the linear contractible part (differential) plus the minimal part. Since both SFTs share the same BRST differential, we can finally conclude that their decomposition models coincide as well. Thus we have
\be\no
\bar{\LL}_0=\bar{\LL}_1\defineR\bar{\LL}\;\text{,}
\ee
where $\bar{\LL}_i$ denotes the decomposition model corresponding to $\LL_i$.
In theorem \ref{thm:dec} we proved that the decomposition model is $IBL_\infty$-isomorphic to its corresponding initial loop homotopy algebra, and because an $IBL_\infty$-isomorphism is invertible the first conclusion is that  two SFTs defined on a given background are $IBL_\infty$-isomorphic:
\be\no
(A,\LL_0)\xleftarrow[]{IBL_\infty-\text{isomorphism}} (A,{\LL})\xrightarrow[]{IBL_\infty-\text{isomorphism}} (A,\LL_1)\;\text{.}
\ee
This is precisely the argumentation of \cite{Kajiura open1}, that was used to show that classical open SFT on a fixed background is unique up to $A_\infty$-isomorphisms. But we can go one step further by tracking the RG flows of the theories: Introduce a UV cut-off $\xi$ for the propagator. The vertices of the action change upon varying the cut-off $\xi$. Geometrically, the variation of the vertices induced by the cut-off scale can be described as follows. To the initial vertices $\mathcal{V}$ we have to attach stubs of length $\xi$. Consistency requires that the string vertices generate a single cover of the full moduli space via Feynman graphs, where the propagator is the operation of sewing in stubs (cylinders) of arbitrary length. Upon attaching stubs to the initial vertices $\mathcal{V}$, we have to add those surfaces that can no longer be produced via Feynman graphs. These surfaces are exactly the ones that arise from graphs where we sew in stubs of length shorter than $2\xi$ \cite{Zwiebach closed, Nakatsu RGopen, Kajiura open2}. Thus for every value of $\xi$ we get a new set of vertices $\mathcal{V}_\xi$. At the algebraic level, the appropriate tool to describe the RG flow is the decomposition model for a certain choice of pre Hodge decomposition. For definiteness let us work in Siegel gauge, where the propagator takes the form 
\be\no
h=-b_0^{+}\int_0^\infty d\tau e^{-\tau L_0^{+}} (1-P) \;\text{,}
\ee
and $P$ is the projection onto physical states, i.e. states annihilated by $L_0^{+}$. Using basic properties of the BRST charge $Q=d$, the energy momentum tensor and the b ghost, we find
\be\no
dh+hd=P-1 \;\text{.}
\ee 
Now the operation of sewing in stubs of length shorter than $2\xi$ corresponds to the map
\be\no
h_\xi=-b_0^{+}\int_0^{2\xi} d\tau e^{-\tau L_0^{+}} \;\text{.}
\ee
Furthermore we have
\be\no
d\,h_\xi+h_\xi \,d=e^{-2\xi L_0^{+}}-1\;\text{,}
\ee
that is we can identify 
\be\no
 P_\xi = e^{-2\xi L_0^{+}} \;\text{.}
 \ee
The map $P_\xi$ is the operation of attaching stubs of length $2\xi$. Recall that the vertices of the decomposition model describe the collection of all graphs with internal lines labelled by the chosen pre Hodge decomposition $h$, the outputs labelled by the corresponding map $P$ and the inputs labelled by the identity map $1$. That is we attach stubs of length $2\xi$ to the outputs and no stubs to the inputs, which is equivalent to attaching stubs of length $\xi$ to outputs and inputs (the two descriptions are $IBL_\infty$-isomorphic).  From the discussion above we can then conclude that the new vertices corresponding to a specific value of the cut-off $\xi$ are given by the decomposition model with the choice of pre Hodge decomposition being $h_\xi$ (see \cite{Kajiura open2} for this discussion in the context of classical open SFT). 

The limit $\xi\to \infty$ describes the decomposition model that corresponds to the S-matrix amplitudes whereas in the limit $\xi \to 0$ we recover the initial loop homotopy algebras. In other words we can interpolate continuously between the S-matrix theory and the initial SFT and thus two SFTs constructed on the same background are connected by a 1-parameter family of $IBL_\infty$-isomorphisms. More precisely we have $IBL_\infty$-isomorphisms parametrized by $t\in[0,1]$
\be\no
\FF(t):(A,\LL(t))\to (A,\LL_0)\;\text{,}
\ee
where $\FF(0)=\op{id}$, $\LL(0)=\LL_0$ and $\LL(1)=\LL_1$.

Recall from appendix \ref{app1} that an $IBL_\infty$-algebra on $A$ is defined to be a Maurer-Cartan element of the Lie algebra $(\coder(SA,\hbar),[\cdot,\cdot])$. The statement that the two loop homotopy algebras $(A,\LL_0)$ and $(A,\LL_1)$ are connected by a 1-parameter family of $IBL_\infty$-isomorphisms implies that they are gauge equivalent Maurer-Cartan elements of $(\coder(SA,\hbar),[\cdot,\cdot])$:
The notion of gauge transformations of $L_\infty$-algebras, and in particular Lie algebras, is reviewed in appendix \ref{app1}. Two Maurer-Cartan elements $\LL_0,\LL_1\in\coder(SA,\hbar)$ are gauge equivalent, if there is a $\Lambda(t)\in \coder(SA,\hbar)$ of degree zero and a $\LL(t)\in \mathcal{MC}\bracketi{\coder(SA,\hbar),[\cdot,\cdot]}$, $t\in[0,1]$, such that
\be\no
\frac{d}{dt}\LL(t)=-[\Lambda(t),\LL(t)] \mand \LL(0)=\LL_0,\quad\LL(1)=\LL_1\;\text{.}
\ee
In our case we have a family of $IBL_\infty$-isomorphisms $\FF(t)$, that is
\be\no
\FF(t)\circ\LL(t)=\LL_0\circ\FF(t) \;\text{,}
\ee
and hence
\begin{align}\no
\frac{d}{dt}\LL(t)&=\frac{d}{dt}\bracketi{\FF(t)^{-1}\circ\LL_0\circ\FF(t)}\\\no
&=\tfrac{d}{dt}\FF(t)^{-1}\circ\FF(t)\circ\LL(t)+\LL(t)\circ\FF(t)^{-1}\circ \tfrac{d}{dt}\FF(t)\\\no
&=-[\Lambda(t),\LL(t)] \;\text{,}
\end{align}
where $\Lambda(t)= \FF(t)^{-1}\circ \frac{d}{dt}\FF(t)$. Thus, we showed that closed SFT on a given background defines a loop homotopy algebra on the (restricted) state space of the world sheet CFT  which is unique up to gauge transformations, or in other words it defines a unique element in the moduli space $\mathcal{M}\bracketi{\coder(SA,\hbar),[\cdot,\cdot]}$. 

\subsection{Uniqueness of closed string field theory}\label{sec:uniclosed}
In \cite{Sachs open-closed} it was shown that a closed string background defines a unique equivalence class of  
classically consistent open string field theories. The equivalence classes are defined w.r.t.  to $L_\infty$ gauge transformations.  In this subsection we will describe the corresponding result for closed string field theory.  For this we first need to  understand the nature of generic gauge transformations and the geometry of the moduli space $\mathcal{M}\bracketi{\op{Coder}(SA),[\cdot,\cdot]}$. Clearly, $L_\infty$ field redefinitions preserve the $L_\infty$ structure and can be interpreted as gauge transformations if they are continuously connected to the identity. On the other hand, field redefinitions include shifts in the closed string background. These are easily seen to be $L_\infty$-isomorphisms along the lines explained in appendix \ref{app1} for $A_\infty$-algebras. This takes us right to he heart of the question about background independence in SFT: For a given homotopy algebra we can consider a non-vanishing Maurer-Cartan element. We then obtain a new homotopy algebra upon conjugation by the Maurer-Cartan element. Background independence then would imply that the structure maps of the minimal model  obtained from this homotopy algebra are equivalent to the perturbative S-matrix elements of the world-sheet CFT in the new background (see figure \ref{fig:back}).

\begin{figure}
\begin{tikzpicture}[node distance=5cm, auto]
  \node (S1) {$S_{CFT}$};
  \node (S2) [below of=S1] {$S^\prime_{CFT}$};
  \node (L1) [right of=S1] {$\{l_n\}_{n\in \N}$};
  \node (L2) [right of=S2] {$\{l_n^\prime\}_{n \in \N}$};
  \draw[->] (S1) to node [swap] {$S_{CFT}\to S_{CFT}+\int \phi$} (S2);
  \draw[->] (S1) to node {operator formalism} (L1);
  \draw[->] (S2) to node {operator formalism} (L2);
  \draw[->] (L1) to node {$L\to E(-c)\circ L \circ E(c)$} (L2);
  \path (S1) to node [swap] {$\mathcal{V}_n\mapsto l_n$} (L1);
  \path (S2) to node [swap] {$\mathcal{V}_n\mapsto l^\prime_n$} (L2);
  \end{tikzpicture}
\caption{background independence}
\label{fig:back}
\end{figure}

In \cite{Kiermaier:2007ba,Kiermaier:2007vu} it was shown that exactly marginal deformation of the open string world-sheet theory correspond to exact solutions, that is Maurer-Cartan elements in open string field theory, thus establishing background independence in one direction, at least in a open neighborhood of a given open string background (see also \cite{Kiermaier:2007vu} for some progress involving marginal deformations). However, since generic string backgrounds are not related by exactly marginal deformations the proof of background independence in general is still not complete. In the next section we will provide an argument for background independence which is sufficient for our purpose. For now we assume background independence. We explained above that background shifts are $\bracketi{\op{Coder}(SA),[\cdot,\cdot]}$-gauge transformations.  
We thus conclude that the perturbative string theories constructed via the world sheet conformal field theories on different closed string backgrounds are within the same  equivalence class. We should note, however, that field redefinitions do not preserve the decomposition of the homotopy algebra and, in particular, background shifts do not preserve the cohomology $H(A,Q)$.  We now want to argue that all infinitesimal deformations of closed string field theory are trivial. 

The proof of this assertion proceeds in close analogy with the corresponding open string result (section 4.2 of \cite{Sachs open-closed}). Let us denote the classical closed string vertices by $f_n\equiv f^{0,0}_{n,0}$ (see section \ref{sec:infty}). The bracket $ [\cdot,\cdot]$ on $\op{Coder}(SA)$ induces the Chevalley-Eilenberg differential $d_C=[L,\cdot]$ on the deformation complex. Any consistent infinitesimal deformation $\Delta f=\{\Delta f_n\}_{n\in \N}$ of the $L_\infty$-structure $\{f_n\}_{n\in \N}$  is  $d_C$-closed, $d_C( \Delta f)=0$. Starting with $n=2$ we conclude $(l_1\Delta f_2)( c_1,c_2)\equiv \Delta f_2(l_1 c_1,c_2)+(-1)^{c_1}\Delta f_2( c_1,l_1c_2)=0$ which implies that $\Delta f_2(c_1,c_2)=\omega_c({\Delta l_1}c_1, c_2)$ with $[l_1, {\Delta l_1}]=0$. For $n=3$ we write 
\begin{equation}
\Delta f _3(c_1,c_2,c_3)=\omega_c (\Delta l_2(c_1,c_2),c_3) 
\end{equation}
Then $\Delta f _2$ and $\Delta f _3$ are subject to the equation
\begin{equation}\label{cc2}
(l_2 \Delta f_2)+(l_1 \Delta f_3)=0.
\end{equation}
It is not hard to see that this implies that $\Delta l_2=[{\mathcal O},l_2]+g_2$ with ${\mathcal O}$ a linear operator and  $[l_1,g_2]=0$. To continue we can assume without restricting the generality that bpz$( {\mathcal{O}})=\pm {\mathcal O}$. If ${\mathcal O}$ is BPZ-odd then 
\begin{equation}
\Delta f_2(c_1,c_2)=\omega_c( [{\mathcal O},l_1] c_1,c_2) +\omega_c( {\mathcal H}c_1, c_2)
\end{equation}
where  $[l_2,{\mathcal H}]=0$. The latter condition together with $[l_1,{\mathcal H}]=0$ from above is in contraction with the uniqueness of the world-sheet BRST charge $Q$. Thus ${\mathcal H}=0$. 
Furthermore, for ${\mathcal O}$ BPZ-odd, $\Delta f_2$ and $\Delta f_3$ are exact. 
This leaves us with ${\Delta l_1}=0$ and bpz$( {\mathcal O})= {\mathcal O}$. 
If ${\mathcal O}$ and $g_2$ are $l_1$-exact then $\Delta f_3$ is again trivial in the $d_C$-cohomology.
To continue, we then assume that ${\mathcal O}$ and $g_2$ are in the cohomology of
$l_1$ and consider $n=4$ which gives the condition
\begin{equation}\label{cc3}
(l_3 \Delta f_2)+ (l_2 \Delta f_3)+(l_1 \Delta f_4)=0
\end{equation}
However, $\Delta f_2=0$ from the above and 
\begin{equation}\label{dc3}
(l_2 \Delta f_3)=\omega_c([\Delta l_2,l_2](c_1,c_2,c_3),c_4) 
\end{equation}
Now, since ${\mathcal O}$ and $g_2$ are in the cohomology of $l_1$ this term cannot be canceled by $(l_1\Delta f_4)$ unless $g_2=0$ and ${\mathcal O}$
is a conformal invariant so that ${\mathcal O}$ can be pulled in the bulk.  Indeed, since ${\mathcal O}$ is not $l_1$-exact, the only way the
differential $l_1$ acting on $\Delta f_4$ can reproduce (\ref{dc3}) is as a
derivation on its moduli space. Since ${\mathcal O}$ is BPZ-even it
cannot be a derivative. On the other hand if ${\mathcal O}$ can be pulled
in the bulk then $[{\mathcal O},l_1]= 0$ is equivalent
to the closed string cohomology condition. Repeating these steps for $n>4$ one obtains the desired result.

It then follows that the only non-trivial elements in
the $d_C$-cohomology are given by a degree 0 insertion of
the form ${\mathcal{O}}|0\rangle_c$. However,  since the semi-relative closed string cohomology at degree $-2$ (ghost number zero)\footnote{The ghost number is related to the grading use here by a double shift (see e.g. \cite{qocha}), i.e. the ground state $|0\rangle_c$ has degree minus two and ghost number zero, respectively.} and vanishing mass contains only the vacuum (or
equivalently the identity operator ${\mathcal{O}}=I$), we conclude that there are no non-trivial deformations of closed string field theory.

\section{Open-closed Correspondence}\label{sec:oc}
Let us turn to the theory of open and closed string. In the geometrical setting of bounded Riemann surfaces, it is generically impossible to distinguish whether a surface should be interpreted as the world sheet of a propagating open or closed string. From the point of view of open strings, a cylinder for example represents a one-loop diagram, whereas the alternative identification is the closed string propagator. There is an algebraic counterpart to this phenomenon which we will investigate. The main result of this section is then to describe an isomorphism between deformations of open string theory and closed string Maurer-Cartan elements. 

\subsection{Open-closed correspondence}
Consider open-closed SFT in the 'classical' limit as described in section \ref{sec:infty}. That is we have vertices corresponding to discs with open, discs with open and closed and spheres with closed string punctures. The open-closed vertices define a $L_\infty$-morphism from the $L_\infty$-algebra of closed strings to the differential graded Lie algebra which controls deformations of the open string $A_\infty$-algebra (see equation (\ref{eq:ocha})). The OCHA (\ref{eq:ochaex}) of \cite{Kajiura open-closed1,Kajiura open-closed2} reads
\be\no
N\circ L = d_h(N)+\inv{2}[N,N]\circ \Delta \;\text{,}
\ee
where $N$ represents the open-closed vertices, $L$ represents the closed vertices and $d_h=[M,\cdot]$ with $M$ representing the open string vertices. $L_\infty$-morphisms preserve Maurer-Cartan elements, thus let us identify the Maurer-Cartan elements on the closed and open side of the OCHA. The Maurer-Cartan elements of the closed string $L_\infty$-algebra are solution of the equations of motion, whereas on the open string side a Maurer-Cartan element of the differential graded Lie algebra $(\op{Coder}^{cycl}(TA),d_h,[\cdot,\cdot])$ defines a finite deformation of the $A_\infty$-algebra $M$. Thus every solution of the closed string equations of motion defines a new open string field theory. This is the classical open-closed correspondence \cite{Kajiura open-closed1,Kajiura open-closed2}. At the infinitesimal level, the open-closed vertex with just one closed input $N_1\defineL N \circ i_1$ defines a morphism of complexes, that is it maps physical closed string states to infinitesimal deformations of the open string field theory. Indeed we know more about this vertex. In \cite{Sachs open-closed} it has been shown that $N_1$ defines a quasi-isomorphism, that is it induces an isomorphism on cohomologies. A powerful theorem of Kontsevich \cite{Kontsevich} then guarantees isomorphism at the finite level, or more precisely that the moduli spaces of two $L_\infty$-algebras 
connected by a $L_\infty$-quasi-isomorphism are isomorphic. In our particular case, this means that the space of closed stings satisfying the equations of motion modulo gauge transformations is in one-to-one correspondence with the space of inequivalent deformations of the open string field theory $M$, i.e. 
\be\no
\mathcal{M}(A_c,L)\cong \mathcal{M}(\op{Coder}^{cycl}(TA_o),d_h,[\cdot,\cdot]) \;\text{.}
\ee

In the previous section we argued that SFT is unique (up to gauge transformations) on a given background, thus we need not distinguish between SFT and world sheet conformal field theory. Let us then formulate the open-closed correspondence in terms of world sheet conformal field theories. We start with an open-closed world sheet conformal field theory. The restriction to open/closed strings induces an open/closed world sheet conformal field theory. The moduli space of the $L_\infty$-algebra corresponding to the closed world sheet conformal field theory is isomorphic to the space of inequivalent open world sheet conformal field theories.

Since the open-closed correspondence relates Maurer-Cartan elements modulo gauge transformations, we give some examples of gauge transformations in order to develop some intuition. On the closed side   we know what gauge transformations are, they leave the equations of motion invariant. Thus we will focus on the open side where we are in the context of $A_\infty$-algebras, and represent three types of gauge transformations therein. In the following $(A_o,M)$ is the $A_\infty$-algebra describing the open SFT.
\bi
\item[\emph{1.}]\emph{1-parameter family of $A_\infty$-isomorphisms:}\hspace{2mm} 
For $t\in[0,1]$, let $(A_o,M_t)$ be $A_\infty$-algebras connected continuously to the initial $A_\infty$-algebra $(A_o,M=M_{t=0})$ by $A_\infty$-isomorphisms 
\be\no
F_t:(A_o,M_t)\to(A_o,M)\;\text{.}
\ee
In general, a gauge transformation of the differential graded Lie algebra $(\op{Coder}^{cycl}(TA_o),[\cdot.\cdot])$ is given by a a 1-parameter family of $A_\infty$-algebras $M_t\in\op{Coder}^{cycl}(TA_o)$, satisfying
\be\label{eq:gaugetransopen}
\frac{d}{dt}M_t=-[\Lambda_t,M_t] \;\text{,}
\ee
for some $\Lambda_t\in \op{Coder}^{cycl}(TA_o)$ of degree zero. Thus we conclude that $(A,M)$ is gauge equivalent to $(A,M_1)$, with $\Lambda_t=F^{-1}_t\circ\tfrac{d}{dt} F_t$.
\item[\emph{2.}]\emph{backgrounnd shifts:}\hspace{2mm} 
A more concrete example is that of shifting the open string background. Let $a\in A_o$ be an open string state of degree zero. A background shift in the initial $A_\infty$-algebra gives rise to a new $A_\infty$-algebra $(A,M[a])$, defined by (see appendix \ref{app1})
\be\no
M[a]=E(-a)\circ M\circ E(a) \;\text{,}
\ee 
where $E(a)=e^{1+a}$ is the lift of the identity map $1$ plus the background $a$ to a cohomomorphism. $E(a)$ defines in fact an $A_\infty$-isomorphism from $(A,M[a])$ to $(A,M)$. We can easily construct a 1-parameter family of $A_\infty$-isomorphisms by gradually scaling the background to zero, that is the $A_\infty$-isomorphisms are $E(ta)$ and $\Lambda_t=\liftd{a}$, where $\liftd{a}$ denotes the lift of $a$ to a coderivation.
\item[\emph{3.}]\emph{attaching strips - RG flow:}\hspace{2mm}
In section \ref{sec:uni} we used the decomposition theorem to discuss the RG flow in closed SFT. Introducing a cut-off $\xi$ for the propagator amounts to attaching stubs of length $\xi$ to the vertices. 
In the case of open SFT and $A_\infty$-algebras a similar discussion can be found in \cite{Kajiura open2}: Attach strips of length $\xi$ to the initial vertices and add those diagrams to the vertices that can no longer be produced by (tree-level) Feynman graphs. The new vertices $M_\xi$ are given by the decomposition model for a suitable choice of pre Hodge decomposition, and the decomposition theorem provides an $A_\infty$-isomorphism $F_\xi:(A,M_\xi)\to(A,M)$. Since we can vary the length of the stubs continuously, the initial SFT and the one with strips attached are related by a 1-parameter family of $A_\infty$-isomorphisms and are thus gauge equivalent.
\ei

Let us now return to the question of background independence.  We explained above that there is an isomorphism between classical solutions of closed string field theory, modulo closed string gauge transformations, and the moduli space $\mathcal{M}(\op{Coder}^{cycl}(TA_o),d_h,[\cdot,\cdot])$ of  inequivalent open string $A_\infty$ structures. Now suppose that for a given non-trivial closed string Maurer-Cartan element $ c$, $L(e^{ c})=0$, we obtain the corresponding open string theory via the operator formalism of the world-sheet CFT on the background  $\phi$ (see figure \ref{fig:back}). The corresponding $A_\infty$ structure is then necessarily given by an element  $M_\phi \in \mathcal{M}(\op{Coder}^{cycl}(TA_o),d_h,[\cdot,\cdot])$. Since this space is isomorphic to the moduli space of  classical solutions of closed string field theory ($L_\infty$ Maurer-Cartan elements) we can identify $M_\phi$ with the image of $e^{ c}$ under the open-closed $L_\infty$-morphism
 \begin{eqnarray}
N: SA_c &\to& \op{Coder}(TA_o)\nonumber\\
e^{ c}&\mapsto& N(e^{ c})=M_\phi
\end{eqnarray}
In addition to the open string vertices on the background $\phi$, the operator formalism  defines a (closed string) $L_\infty$-algebra $L_\phi$, as well as an $L_\infty$-morphism $N_\phi\;:\; (A_c,L_\phi)\to(\op{Coder}(TA_o),[M_\phi,\cdot],[\cdot,\cdot])$.
Since the vector space $A_c$ is assumed to be invariant $L_\phi$ is related to $L$ by an invertible $L_\infty$-field redefinition $K: SA_c\to SA_c$ so that $N_\phi=N\circ K$. On the other hand, we have from the above that 
\be 
N_\phi(1)=N(e^{ c})=(N\circ E( c))(1) \;\text{.}
\ee
 From this we then conclude that $K=E( c)$ and thus $L_\phi=E(- c)\circ L\circ E( c)$. This then proves independence of backgrounds which leave the vector space $A_c$ (state space of the world-sheet CFT) invariant. 
 
\subsection{Quantum case}
In the previous subsection we discussed the open-closed correspondence as it arises from 'classical' open-closed SFT. The correspondence is based on the property that $L_\infty$-morphisms preserve Maurer-Cartan elements and the way the OCHA (\ref{eq:ocha}) is defined. The algebraic structure of quantum open-closed SFT is quite similar to that of the OCHA: The open-closed vertices define an $IBL_\infty$-morphism from the loop algebra of closed strings to the involutive Lie bialgebra on the cyclic Hochschild complex of open strings (\ref{eq:qocha1}). As in the classical case, $IBL_\infty$-morphisms preserve Maurer-Cartan elements and thus we can look for closed string Maurer-Cartan elements which will in turn define consistent quantum SFTs of only open strings. The $L_\infty$-morphism in the classical case was shown to be a $L_\infty$-quasi-isomorphism \cite{Sachs open-closed}, which implies that the $IBL_\infty$-morphism is a quasi-isomorphism as well. Furthermore the moduli spaces of $IBL_\infty$-quasi-isomorphic $IBL_\infty$-algebras are isomorphic \cite{Cieliebak ibl}. Thus quantum open SFTs, if there exists any, are in one-to-one correspondence with Maurer-Cartan elements of the closed string loop algebra (up to gauge transformations). Therefore let us investigate the Maurer-Cartan equation of the closed string loop algebra 
\be\no
\LL_c=D(d+l_q^\ast+\hbar \omega_c^{-1})\in \coder(SA_c,\hbar) \;\text{,}
\ee
as described in section \ref{sec:dec}. A Maurer-Cartan element $\cc=\sum_{n,g}\hbar^{n+g-1}c^{n,g}$, $c^{n,g}\in A_c^{\w n}$, of $(A_c,\LL_c)$ satisfies
\be\label{eq:MCloop}
\LL_c(e^{\cc})=0 \;\text{.}
\ee
The corresponding quantum open string field theory $\mm[\cc]$ is defined by
\be\no
\lifte{\mm[\cc]}=\lifte{\nn}(\lifte{\cc})\;\text{,}
\ee
and satisfies
\be\label{eq:qAinfty}
\LL_o(\lifte{\mm[\cc]})=\lifte{\nn}\circ \LL_c (\lifte{\cc})=0\;\text{,}
\ee
due to equation (\ref{eq:qocha1}). Similarly as in equation (\ref{eq:qocha2}), equation (\ref{eq:qAinfty}) can be recast into
\be\no
\liftd{d}_h(\mm[\cc])+\liftd{[\cdot,\cdot]}(\mm[\cc])+\inv{2}\liftd{[\cdot,\cdot]}(\mm[\cc]\w\mm[\cc])-\liftd{[\cdot,\cdot]}(\mm[\cc])\w\mm[\cc]=0\;\text{,}
\ee
which is the defining equation of a quantum $A_\infty$-algebra \cite{Herbst qAinfty}. The closed string Maurer-Cartan equation (\ref{eq:MCloop}) was analyzed in \cite{qocha} and implies the following:
\bi
\item[(i)]
$c\defineL c^{1,0}$ has to satisfy the classical Maurer-Cartan equation 
\be\no
\sum_{n=0}^\infty \inv{n!} l^0_n(c^{\w n})=0 \;\text{,}
\ee
that is $c$ defines a closed string background.
\item[(ii)]
Consider the part $g^{-1}\defineL c^{2,0}$ of the Maurer-Cartan element $\cc$. Contracting one output of $g^{-1}$ with the symplectic structure $\omega_c$ we obtain a linear map
\be\no
h\defineL g^{-1}\circ\omega_c : A_c \to A_c \;\text{.}
\ee
In leading order in $\hbar$, we found in \cite{qocha} that the part of the Maurer-Cartan equation with two outputs implies
\be\label{eq:BRSTtrivial}
d[c]\circ h + h \circ d[c]=-1 \;\text{.}
\ee 
Equation (\ref{eq:BRSTtrivial}) states that the cohomology of $d[c]$ is trivial, i.e. that there are no physical states in the background $c$. Furthermore we can identify $h$ as the propagator corresponding to this background and $g^{-1}$ as the inverse metric on the unphysical states (see section \ref{sec:dec}). Note that in order to derive equation (\ref{eq:BRSTtrivial}) we had to contract with the symplectic structure, and the identity map on the right hand side of equation (\ref{eq:BRSTtrivial}) stems from the assumption that $\omega_c$ is non-degenerate. This observation will be crucial in the topological string, since there the symplectic structure degenerates on the physical states and BRST triviality does not follow from the Maurer-Cartan equation in that case.
\ei
From these two observations we will now prove by contradiction that the loop homotopy algebra of closed strings does not admit any Maurer-Cartan element (assuming that $\omega_c$ is non-degenerate). 
Assume that $\cc$ is a Maurer-Cartan element of $\LL_c$ and consider the background shifted loop algebra
\be\no
\LL_c[c]=E(-c)\circ \LL_c \circ E(c)=D(d[c]+l_q^\ast[c]+\hbar \omega_c^{-1})\;\text{.}
\ee
Again $c=c^{1,0}$ and $g^{-1}=c^{2,0}$. Next we construct the minimal model $(H(A_c,d[c]),\tilde{\LL}_c[c])$ of $(A_c,\LL_c[c])$. Since the cohomology of $d[c]$ is trivial, i.e. $H(A_c,d[c])=\{0\}$, the only candidate Maurer-Cartan element of $(H(A_c,d[c]),\tilde{\LL}_c[c])$ is $0$, but $\tilde{\LL}_c[c](\lifte{0})=\tilde{\omega}_c^{-1}\neq 0$. Thus $(H(A_c,d[c]),\tilde{\LL}_c[c])$ has no Maurer-Cartan elements and likewise
$\mathcal{M}(H(A_c,d[c]),\tilde{\LL}_c[c])=\emptyset$. Since by construction, $(H(A_c,d[c]),\tilde{\LL}_c[c])$ is quasi-isomorphic to $(A_c,\LL_c[c])$ and quasi-isomorphic $IBL_\infty$-algebras have isomorphic moduli spaces \cite{Cieliebak ibl}, we conclude
\be\label{eq:moduli}
\emptyset=\mathcal{M}(H(A_c,d[c]),\tilde{\LL}_c[c])\cong \mathcal{M}(A_c,{\LL}_c[c])\cong  \mathcal{M}(A_c,{\LL}_c)\;\text{.}
\ee
The second isomorphism in (\ref{eq:moduli}) follows from the observation that a Maurer-Cartan element $\cc$ of $\LL_c[c]$ corresponds to a Maurer-Cartan element $\cc+c$ of $\LL_c$.

Equation (\ref{eq:moduli}) states that there are no Maurer-Cartan elements of the closed string loop algebra, which in turn implies that there is no consistent quantum theory of only open strings, or in other words it is impossible to deform the classical open string field theory determined by an $A_\infty$-algebra $m$ into a quantum $A_\infty$-algebra $\mm$.

\subsection{Illustration}
We will illustrate this last point with an simple toy example: We consider a differential Lie algebra $(A,d,[\cdot,\cdot])$. A harmonious Hodge decomposition is then defined by the triple $d,h$ and\footnote{For $A=\Omega^\bullet(M)$, the space of differential forms on $M$ we have $h=-\frac{d^\dagger}{\Delta}$.}  $P$. The classical Maurer-Cartan equation 
\be\label{MCI1}
dc+[c,c]=0
\ee
implies that the $L_\infty$ structure of the corresponding minimal model follows from the equation $P[c,c]=0$, where $c=c_P+c_U+c_T$ is recursively determined through 
\be
c_U=h [c,c] \;\text{.}
\ee
Thus, $\tilde l_2(c_P,c_P)=P[c_p,c_p]$, $\tilde l_3(c_P,c_P)=P[c_P,h[c_P,c_P]]+\cdots$ and so forth. 

Let us now turn to the quantum Maurer-Cartan equation 
\be 
D(d+[\cdot,\cdot]+\hbar\omega^{-1})(e^{c+\hbar g^{-1}})=0
\ee
where $g^{-1}=\frac{1}{2}u_i\wedge {}^iu\in A\wedge A$. We can disentangle this equation by successive projections as in section VI.A of \cite{qocha} onto $A$, $A\wedge  A$ and $A\wedge A\wedge  A$ respectively. This gives 
\begin{eqnarray}
0&=& dc+[c,c]+\frac{\hbar}{2} [u_i, {}^iu]\label{MCIa}\\
0&=&d u_i\wedge {}^iu+[c, u_i]\wedge {}^iu+\omega^{-1}\label{MCIb}\\
0&=& [u_i, u_j]\wedge {}^iu\wedge {}^ju\label{MCIc}
\end{eqnarray}
where (\ref{MCIb}) and  (\ref{MCIc}) come with a global factor of $\hbar$ and $\hbar^2$ respectively. At order $\hbar^0$ we recover the classical Maurer-Cartan equation (\ref{MCI1}). The obstructions at the quantum level arise from (\ref{MCIb}). Upon composing  (\ref{MCIb}) with $\omega$ to the right we recover the propagator equation in the background $c$, 
\be 
d[c]\circ h+h\circ d[c] = P[c]-1\;\text{,}
\ee
provided $\omega$ is degenerate, i.e. vanishes on $H(A)$. If $\omega$ is non-degenerate then (\ref{MCIb}) has no solutions and consequently the quantum moduli space is the empty set. This is the case in bosonic string theory. Further obstructions can arise from (\ref{MCIc}). In Chern-Simons theory (\ref{MCIc}) is compatible with the propagator equation but we cannot exclude obstructions arising from (\ref{MCIc}) in general. The topological string, discussed in the next section, is another example where $\omega$ is degenerate and (\ref{MCIb}) and (\ref{MCIc}) are compatible.

\section{Applications to topological strings}\label{sec:top}
The world sheet description of the topological string is based on a supersymmetric sigma model whose target space is a Calabi-Yau 3-fold, X (see \cite{Witten:1991zz} for a good review).  The world sheet theory admits 4 supercharges as well as an R-symmetry current and the energy momentum tensor. The latter can be twisted by the R-current in such a way that a linear combination of the supercharges defines a differential $Q$ on the state space. Furthermore, the world sheet theory is topological on the cohomology of $Q$.  There are two possible ways to twist the energy momentum tensor leading to two inequivalent theories, the A-model and the B-model. The algebra of the triplet consisting of the differential $Q$, together with the remaining supercharge and the stress tensor is isomorphic to that of $Q$, $b$-ghost and the energy momentum tensor of the BRST quantized bosonic string CFT. 
Thus we can apply the operator formalism as in bosonic string field theory to define vertices. The corresponding field theories have been constructed in \cite{Witten CS} in the open case, and in the closed case in \cite{Vafa KS} and \cite{Bershadsky KG} for the B- and A-model, respectively. 

In the A-model there is a natural chain map between the de Rham complex of $X$ and the BRST complex of the twisted world sheet sigma model.  If one restricts to local operators this map induces an isomorphism between the de Rham cohomology  and the BRST cohomology. In particular, the degree $(1,1)$ elements of the BRST cohomology of the twisted world sheet sigma model are identified with the K\"ahler structure of $X$. 

In the B-model, on the other hand, there is a chain map between the BRST cohomology and 
$\oplus_{p.q}H^p(X,\wedge^qT^{1,0} X)$. Again, this induces an isomorphism on the cohomology upon restriction to local operators. Consequently the degree $(1,1)$ elements of the BRST cohomology are identified with the changes of complex structure of $X$. 

Although bosonic and topological string theory share some fundamental properties, there are many crucial differences which we summarize here:
\bi
\item[(i)] The action of topological open/closed string theory is cubic \cite{Witten CS}/\cite{Vafa KS,Bershadsky KG}. Furthermore these actions satisfy the quantum BV master equation.
In particular, the closed string vertices define a loop homotopy algebra without including higher vertices.
\item[(ii)] The operator $b_0^{-}$ generically does not have trivial cohomology. Thus it is impossible to define an operator 
$c_0^{-}$, such that $\{b_0^{-},c_0^{-}\}=1$. Such an operator exists only on all but the physical states, where the physical states are identified with the kernel of $L_0^+$ (i.e. we work in Siegel gauge).
In other words, there is an operator $c_0^{-}$, such that
\be\no
\{b_0^{-},c_0^{-}\}=1-P \;\text{,}
\ee
where $P$ is the projection onto the physical states \cite{Vafa KS, Bershadsky KG}.
\item[(iii)] The minimal models corresponding to the off-shell $L_\infty$-algebras of closed strings vanish identically \cite{Kajiura open-closed2}. 
\item[(iv)]Following the prescription of \cite{Zwiebach closed}, the symplectic structure on the closed string side is defined by inserting the operator $c_0^{-}$ into the inner product (bpz inner product in the context of bosonic string theory), that is
\be\no
\omega_c=(\cdot,c_0^{-}\cdot)\;\text{.}
\ee
Since $c_0^{-}$ is defined only on the trivial and unphysical states, the symplectic structure of closed strings degenerates on the physical states. Thus we conclude that the entire minimal model of the loop homotopy algebra of closed strings vanishes - the vertices and the symplectic structure. 
\ei

Let us now turn to the open-closed correspondence in the context of topological string theory. At the classical level, the open-closed vertices again define an $L_\infty$-morphism from the $L_\infty$-algebra defined by the closed string vertices to the Hochschild complex of the open string $A_\infty$-algebra. For the B-model, it has been shown in \cite{Hofman Bmodel} that the $L_\infty$-morphism is indeed a quasi-isomorphism. Furthermore they conjectured that this should be the case in any string field theory realization of the OCHA, which has been confirmed for the bosonic string \cite{Sachs open-closed} and also for Landau-Ginzburg models \cite{Carqueville LG}. At a more abstract level, the results of \cite{Costello} seem to support this conjecture as well. This implies that the moduli spaces of the $L_\infty$-algebras connected by the quasi-isomorphism are isomorphic, that is
\be\no
\mathcal{M}(A_c,L_c)\cong \mathcal{M}(\op{Coder}^{cycl}(TA_o),d_h,[\cdot,\cdot]) \;\text{,}
\ee
where $L_c$ denotes the $L_\infty$-algebra of closed strings and $d_h=[M,\cdot]$ is the Hochschild differential corresponding to the open string $A_\infty$-algebra $M$. An $L_\infty$-algebra is quasi-isomorphic to its minimal model which implies isomorphy of their respective moduli spaces. Recall that one of the distinguished properties of topological strings is that the closed string minimal model vanish identically, and thus we conclude
\be\no
\mathcal{M}(\op{Coder}^{cycl}(TA_o),d_h,[\cdot,\cdot])\cong\mathcal{M}(A_c,L_c)\cong\mathcal{M}(H(A_c,d),\tilde{L_c})=H_0(A_c,d_c)\;\text{,}
\ee
where $\tilde{L_c}=0$ denotes the minimal model corresponding to $L_c$ and $H_0(A_c,d_c)$ represents the cohomology of $d$ at degree zero. That is, inequivalent deformations of topological open string field theory are parametrized by physical closed string states. 

On the other hand one can also ask for deformations of topological open string (tree-level) amplitudes induced by closed strings \cite{Carqueville LG}. To attempt this question, it is useful to think of the OCHA as a single algebraic entity and take the minimal model of it \cite{Kajiura open-closed1}. The minimal model of an OCHA is described by the minimal model of its closed string $L_\infty$-algebra linked to the deformation complex of the minimal model of its open string $A_\infty$-algebra by an $L_\infty$-morphism. If the $L_\infty$-morphism of the initial OCHA is a quasi-isomorphism, then so is the $L_\infty$-morphism of the corresponding minimal model. This implies
\be\no
\mathcal{M}(\op{Coder}^{cycl}(TH(A_o,d_o)),\tilde{d}_h,[\cdot,\cdot])\cong\mathcal{M}(H(A_c,d),\tilde{L}_c)=H_0(A_c,d_c)\;\text{,}
\ee
where $\tilde{d}_h=[\tilde{M},\cdot]$ is the Hochschild differential induced by the minimal model of the open string $A_\infty$-algebra and $H(A_o,d_o)$ represents the physical open string states. 
In other words, physical closed string states parametrize the space of inequivalent deformations of topological open string (tree-level) amplitudes\footnote{This space includes deformations with a non-vanishing tadpole. Amplitudes without tadpole correspond to the moduli space of the full OCHA \cite{Kajiura open-closed1, Kajiura open-closed2}.}. 

Now we draw our attention to the quantum case. As state previously, the cubic closed string action satisfies the quantum BV master equation and thus defines a loop homotopy algebra. If the symplectic structure $\omega_c$ is non-degenerate, we showed in section \ref{sec:oc} that the corresponding loop homotopy algebra does not admit any Maurer-Cartan element at all. In the topological string, the symplectic structure degenerates on the physical states. This implies that equation (\ref{eq:BRSTtrivial}) is modified to
\be\no
d_c[c]\circ h + h\circ d_c[c]=P[c]-1 \;\text{,}
\ee
where $P[c]$ is the projection onto the physical states in Siegel gauge. This is the propagator equation or in mathematical terms $h$ is a harmonious Hodge decomposition of $A_c$ (see section \ref{sec:dec}). 
Thus the Maurer-Cartan equation of the loop homotopy algebra of topological strings does not require a vanishing BRST cohomology, and hence, in contrast to bosonic string theory, the conclusion that there cannot be any Maurer-Cartan elements does not persist here. Similar to bosonic string field theory, the full open-closed theory defines a QOCHA, where the open-closed vertices define an $IBL_\infty$-morphism from the loop homotopy algebra of closed strings to the involutive Lie bialgebra on the cyclic Hochschild complex of open strings. The $IBL_\infty$-morphism is a quasi-isomorphism, since the classical $L_\infty$-morphism is, and thus the moduli spaces of the respective $IBL_\infty$-algebras are isomorphic. 
\begin{align}\label{eq:topqoc}
\mathcal{M}(\mathcal{A}_o,\LL_o)&\cong \mathcal{M}(A_c,\LL_c)\\\no
&\cong \mathcal{M}(H(A_c,d_c),\tilde{\LL_c})\\\no
&=\big\{\cc=\sum_{n,g}\hbar^{n+g-1}c^{n,g} \,\big|\, c^{n,g}\in H(A_c,d_c)^{\w n},\; |c^{n,g}|=0\big\}
\end{align}
In equation (\ref{eq:topqoc}) $\mathcal{A}_o=\op{Hom}^{cycl}(TA_o,R)$ denotes the cyclic Hochschild complex, $\LL_o=\liftd{d}_h+\liftd{[\cdot,\cdot]}+\hbar \liftd{\delta}$, $\LL_c$ is the closed string loop homototpy algebra and  $\tilde{\LL}_c=0$ its corresponding minimal model (see appendix \ref{app1} and \ref{sec:dec}). Maurer-Cartan elements of $\LL_o$ represent deformations of the initial $A_\infty$-algebra $M$ to a quantum $A_\infty$-algebra, or in other words, they represent consistent quantum theories of only open strings. Equation (\ref{eq:topqoc}) states that the space of quantum open string theories is parametrized by symmetric tensors in $H(A_c,d_c)$ of degree zero, which generalizes the classical open-closed correspondence where we allowed just for vectors. As in the classical case, we can also ask for bulk induced deformations of open string amplitudes (including loops). Again the idea is to take the minimal model of the whole QOCHA, which is guaranteed to exist due to \cite{Lazarev1,Lazarev2}, and leads to the statement that
\be\no
\mathcal{M}(\tilde{\mathcal{A}}_o,\tilde{\LL}_o)\cong \mathcal{M}(H(A_c,d_c),\tilde{\LL_c}) =\big\{\cc=\sum_{n,g}\hbar^{n+g-1}c^{n,g} \,\big|\, c^{n,g}\in H(A_c,d_c)^{\w n},\; |c^{n,g}|=0\big\} \;\text{,}
\ee  
where $\tilde{\mathcal{A}}_o=\op{Hom}^{cycl}(TH(A_o,d_o))$ and $\tilde{\LL}_o=\liftd{\tilde{d}}_h+\liftd{[\cdot,\cdot]}+\hbar \liftd{\delta}$.
Thus we find that the topological open string amplitudes can be deformed by closed strings in a more general way then discussed in \cite{Herbst qAinfty}. It is not just closed string backgrounds but also higher rank tensors that deform the topological open string amplitudes. In order to get a world-sheet interpretation of such deformations we recall the chain map from the de Rham complex (A-model) to the BRST complex reviewed at the beginning of this section. Correspondingly tensor deformations are implemented on the world-sheet by non-local CFT operators. It would be interesting to see if non-trivial deformations of this type exist.

\section{Outlook}
In this paper we discussed several properties of bosonic string field theory in terms of homotopy algebras. In particular, combining the open-closed homotopy algebra with the isomorphism between consistent infinitesimal deformations of classical open string field theory and physical closed string states, we established an isomorphism between closed string Maurer-Cartan elements and consistent finite deformations of open string field theory. The QOCHA also provides a simple algebraic description of the obstructions (notably absent in the topological string) to the existence of Maurer-Cartan elemnts at the quantum level.  

We also proved a decomposition theorem for the loop algebra of quantum closed string field theory which, in turn, implies uniqueness of closed string field theory on a given background. Finally, we also addressed uniqueness and background independence of closed string field theory using OCHA. 

In contrast, a complete formulation of super string field theory has not been developed yet \cite{Witten:1986qs,Berkovits:2000fe}. Generalizing the prescription of \cite{Zwiebach closed, Zwiebach open-closed} to the supersymmetric case, the first task would be to construct a BV algebra on the singular chains of super Riemann surfaces. The super conformal field theory of the super string is then expected to define a morphism of BV algebras and would lead to some novel algebraic structures on the corresponding state space.

\vspace{2cm}\noindent {\bf Acknowledgements}: The authors would like to thank Barton Zwiebach for many informative discussions and subtle remarks as well as Branislav Jurco, Kai Cieliebak and Sebastian Konopka for helpful discussions. K.M. would like to thank  Martin Markl and Martin Doubek who stimulated his interest in operads and their applications to string field theory. This project was supported in parts by the DFG Transregional Collaborative Research Centre TRR 33,
the DFG cluster of excellence ``Origin and Structure of the Universe'' as well as the DAAD project  54446342,
I. S. would like to thank the Center for the fundamental laws of nature at Harvard University for hospitality during the initial stages of this project.


\newpage
\appendix

\section{$A_\infty$-, $L_\infty$- and $IBL_\infty$-algebras}\label{ssec:ALIBLinfty}\label{app1}

\subsection{$A_\infty$- and $L_\infty$-algebras}
Let $A=\oplus_{n\in\Z}A_n$ be a graded module over some ring $R$ and consider the tensor algebra 
\be\no
TA=\bigoplus_{n=0}^\infty A^{\tp n}\;\text{,}
\ee
with comultiplication $\Delta:TA\to TA\tp TA$ defined by
\be\no
\Delta (a_1\tp\dots\tp a_n)= \sum_{i=0}^n (a_1\tp \dots\tp a_i)\tp(a_{i+1}\tp\dots\tp a_n)\;\text{.}
\ee
We have the canonical projection maps $\pi_n:TA\to A^{\tp n}$ and inclusion maps $i_n:A^{\tp n}\to TA$.

A coderivation $D\in \op{Coder}(TA)$ is a linear map on $TA$ that satisfies
\be\no
(D\tp \op{id}+\op{id}\tp D)\circ \Delta = \Delta\circ D \;\text{.}
\ee
From this property it follows that there is an isomorphism $\op{Coder}(TA)\cong \op{Hom}(TA,A)$ induced by
\be\no
 D\mapsto \pi_1\circ D \;\text{,}
\ee
with inverse (lifting map)
\be\no
d\mapsto \liftd{d}\defineL (\op{id}\tp d \tp \op{id})\circ \Delta_3 \;\text{,}
\ee
where $\Delta_n$ denotes the $n$-fold comultiplication.

Similarly a cohomomorphisms $F\in\op{Cohom}(TA,TA^\prime)$ is a linear map from $TA$ to $TA^\prime$ satisfying
\be\no
\Delta\circ F= (F\tp F)\circ\Delta\;\text{,}
\ee
which implies $\op{Cohom}(TA,TA^\prime)\cong \op{Hom}(TA,A^\prime)$, induced by
\be\no
F\mapsto \pi_1\circ F \;\text{,}
\ee
with inverse (lifting map)
\be\no
f\mapsto \lifte{f}\defineL \sum_{n=0}^\infty f^{\tp n}\circ\Delta_n \;\text{.}
\ee

The Gerstenhaber bracket $[\cdot,\cdot]$ defined by
\be\no
[D_1,D_2]=D_1\circ D_2-(-1)^{D_1 D_2}D_2\circ D_1
\ee
endows $\op{Coder}(TA)$ with the structure of a graded Lie algebra.
Now an $A_\infty$-algebra is defined by a coderivation $M\in \op{Coder}(TA)$ of degree 1 that squares to zero. This in turn makes 
$\op{Coder}(TA)$ a differential graded Lie algebra (DGL) with Hochschild differential $d_h$ defined by
\be\no
d_h=[M,\cdot]\;\text{,}
\ee
and deformations of $M$ are controlled by this DGL. An $A_\infty$-algebra $M$ is denoted as strong or weak, corresponding to whether $\pi_1\circ M\circ i_0$ is zero or non-zero respectively. Let $(A,M)$ and $(A^\prime,M^\prime)$ be $A_\infty$-algebras, then an $A_\infty$-morphism $F\in\op{Morph}(A,A^\prime)$
is a cohomomorphism of degree zero which commutes with the differentials
\be\no
F\circ M= M^\prime \circ F \;\text{.}
\ee
Furthermore $F\in\op{Morph}(A,A^\prime)$ is called an $A_\infty$-quasi-isomorphism if the linear map $\pi_1\circ F\circ i_1$ induces an isomorphism on cohomologies. Similarly it is called an
$A_\infty$-isomorphism if $\pi_1\circ F\circ i_1$ defines an isomorphism. We also distinguish between strong and weak $A_\infty$-morphisms, corresponding to whether  $\pi_1\circ f\circ i_0$ is zero or non-zero respectively.

A Maurer-Cartan element of an $A_\infty$-algebra $(A,M)$ is a degree zero element $a\in A$ that satisfies
\be\no
M(e^a)=0 \;\text{,}
\ee
that is $e^a$ is a constant (no inputs) $A_\infty$-morphism on $A$. The space of all Maurer-Cartan elements is denoted by $\mathcal{MC}(A,M)$.

Furthermore we have the notion of gauge equivalence on the space of Maurer-Cartan elements: Gauge transformations are implemented by a family of $A_\infty$-isomorphisms 
$U_\lambda(t)$, defined by
\be\no
\frac{d}{dt}U_\lambda(t)=[M,\liftd{\lambda}(t)]\circ U_\lambda(t) \mand U_\lambda(0)=\op{id}\;\text{,}
\ee
where $\lambda(t)\in A$ is of degree minus 1 \cite{Kajiura open1}. The moduli space of an $A_\infty$-algebra is defined to be the Maurer-Cartan space modulo gauge transformations
\be\no
\mathcal{M}(A,M)\defineL\mathcal{MC}(A,M)/\sim\;\text{,}
\ee
that is for $a,b\in \mathcal{MC}(A,M)$, $a\sim b$ if there is a gauge transformation $U_\lambda(t)$ such that $e^b=U_\lambda(1)(e^a)$.

A background shift by an element $a\in A$ of degree zero is implemented by the cohomomorphism $E(a)$ defined by
\be\no
E(a)(a_1\tp\dots\tp a_n)=e^a\tp a_1\tp e^a\tp\dots \tp e^a\tp a_n\tp e^a \;\text{.}
\ee
For a given $A_\infty$-algebra $(A,M)$ the background shifted $A_\infty$-algebra is defined by $M[a]=E(-a)\circ M \circ E(a)$, which makes $E(a)$ 
a weak $A_\infty$-isomorphism.

Suppose the module $A$ is endowed with an odd symplectic structure $\omega:A\tp A\to R$. A coderivation $D\in \op{Coder}(TA)$ is called cyclic, if the map 
\be\no
\omega(\pi_1\circ D, \cdot):TA\tp A \to R
\ee
is cyclic symmetric in $TA\tp A$. The space of cyclic coderivations is denoted by $\op{Coder}^{cycl}(TA)$ and is closed under the Gerstenhaber bracket.

$L_\infty$-algebras are constructed in a similar way, where instead of the tensor algebra $TA$ one considers the symmetric algebra $SA$. The coalgebra structure on $SA$ is given by
\be\no
\Delta (c_1,\cdots,c_n)=\sum_{i=0}^n\sideset{}{^\prime}\sum_\sigma (c_{\sigma_1}\wedge\cdots\wedge c_{\sigma_i})\tp(c_{\sigma_{i+1}}\wedge\cdots\wedge c_{\sigma_n})\;\text{,}
\ee
where $\sum_\sigma^\prime$ indicates the sum over all permutations $\sigma\in S_n$ constraint to $\sigma_1<\cdots<\sigma_i$ and $\sigma_{i+1}<\cdots<\sigma_n$ (unshuffels).
In table \ref{tab:AinftyLinfty} we summarize the definitions from above, together with the corresponding counterparts in the $L_\infty$ context.

\begin{center}
\begin{table}\caption{$A_\infty$- and $L_\infty$-algebras in summary}\label{tab:AinftyLinfty}\vspace{4pt}
\begin{tabular}{c||c|c}
				& $A_\infty$ & $L_\infty$ \\[1pt]\hline\hline
algebra  	& $M^2=0$, $|M|=1$ & $L^2=0$, $|L|=1$ \\[1pt]\hline
morphism & $F\circ M=M^\prime \circ F$ & $F\circ L = L^\prime \circ F$\\[1pt]\hline
lift (coder)  & $\liftd{d}=(\op{id}\tp d \tp \op{id})\circ \Delta_3$ & $\liftd{d}=(d\w \op{id})\circ\Delta$ \\[1pt]\hline
lift (cohom)  & $\lifte{f}=\sum_{n=0}^\infty f^{\tp n}\circ \Delta_n$ & $\lifte{f}=\sum_{n=0}^\infty \inv{n!} f^{\w n}\circ \Delta_n$ \\[1pt]\hline
\multirow{3}{*}{background shift}   & $E(a)(a_1\tp\dots \tp a_n)=$ & $E(c)(c_1\w \dots \w c_n)=$ \\
						     & $ e^a\tp a_1\tp e^a \dots e^a\tp a_n\tp e^a$ & $ e^c\w c_1\w\dots \w c_n$ \\[1pt]\cline{2-3}
						     & $M[a]=E(-a)\circ M\circ E(a)$ & $L[c]=E(-c)\circ L\circ E(c)$ \\[1pt]\hline
Maurer-Cartan element & $M(e^a)=0$, $|a|=0$ & $L(e^c)=0$, $|c|=0$ \\[1pt]\hline
gauge transformation & $\frac{d}{dt}U_\lambda(t)=[M,\liftd{\lambda}(t)]\circ U_\lambda(t)$ & $\frac{d}{dt}U_\lambda(t)=[L,\liftd{\lambda}(t)]\circ U_\lambda(t)$ \\[1pt]\hline
cyclicity & $\omega(\pi_1\circ D, \cdot)$ cyclic sym. & $\omega(\pi_1\circ D, \cdot)$ full sym.
\end{tabular}
\end{table}
\end{center}

\subsection{$IBL_\infty$-algebras}\label{inftya}
Homotopy involutive Lie bialgebras ($IBL_\infty$-albegras) as presented in \cite{Cieliebak ibl}, are constructed similarly to $L_\infty$-algebras. The definition includes an external parameter $\hbar$ and makes use of higher order coderivations \cite{Akman coder, Bering coder, Markl loop}. In the previous section we saw that (first oder) coderivations on $SA$ are in one-to-one correspondence with homomorphisms from $SA$ to $A$, where the correspondence is established by the lifting map and the projection $\pi_1$. That is, a first order coderivations is the lift of a homomorphism with an arbitrary number of inputs and one output.  Higher order coderivations are then introduced by allowing for several outputs of the homomorphism:
The space $\op{Coder}^n(SA)$ of coderivations of order $n$ is isomorphic to $\op{Hom}(SA,\Sigma^nA)$, where $\Sigma^nA\defineL \bigoplus_{i=1}^nA^{\w i}$. The isomorphism is given by
\be\no
\begin{array}{rcl}
\op{Hom}(SA,\Sigma^nA)&\to& \op{Coder}^n(SA)\\[3pt]
d & \mapsto & \liftd{d}=(d\w\op{id})\circ\Delta\;\text{,}
\end{array}
\ee 
with inverse
\be\no
\begin{array}{rcl}
\op{Coder}^n(SA) &\to& \op{Hom}(SA,\Sigma^nA)\\[3pt]
D &\mapsto&\left\{ \bnm \pi_1\circ D \\+ \bracketii{\pi_2\circ D -(\pi_1\circ D\w\pi_1)\circ\Delta} \\\vdots\\ +\bracketii{\pi_n\circ D -\sum_{i+j=n-1}(\pi_i\circ D\w \pi_{j+1})\circ\Delta} \enm \right. \;\text{.}
\end{array}
\ee 

The graded commutator 
\be\no
[D_1,D_2]=D_1\circ D_2-(-1)^{D_1D_2}D_2\circ D_1 \;\text{,}
\ee
where  $D_1$, $D_2$ are arbitrary higher order coderivations, satisfies the property
\be\label{eq:coderprop}
[\op{Coder}^i(SA),\op{Coder}^j(SA)]=\op{Coder}^{i+j-1} \;\text{.}
\ee

Consider now the space 
\be\no
\coder(SA,\hbar)\defineL \bigoplus_{n=1}^\infty \hbar^{n-1} \op{Coder}^n(SA) \;\text{.}
\ee
Equation (\ref{eq:coderprop}) implies that the graded commutator raises $\coder(SA,\hbar)$ to a graded Lie algebra.
From the definition of higher order coderivations, we obtain the isomorphism $\coder(SA,\hbar)\cong \bigoplus_{n=1}^\infty \hbar^{n-1}\op{Hom}(SA,\Sigma^nA)$.
For an element $\frak{d}\in \bigoplus_{n=1}^\infty \hbar^{n-1}\op{Hom}(SA,\Sigma^nA)$ we define associated maps $d^{n,g}\in \op{Hom}(SA,A^n)$ by
\be\no
\frak{d}=\sum_{n=1}^\infty\sum_{g=0}^\infty \hbar^{n+g-1}\,d^{n,g} \;\text{,}
\ee
that is we expand $\frak{d}$ in the number of outputs.

The definition of $IBL_\infty$-algebras, $IBL_\infty$-morphisms, etc. resembles that of $L_\infty$-algebras, except that we substitute $\op{Hom}(SA,A)$ by 
$\bigoplus_{n=1}^\infty \hbar^{n-1}\op{Hom}(SA,\Sigma^nA)$:
An $IBL_\infty$-algebra is defined by an element $\frak{L}\in\coder(SA,\hbar)$ of degree one that squares to zero. A cohomomorphism $\frak{F}\in\cohom(SA,SA^\prime,\hbar)$ is determined by a map 
$\frak{f}\in\bigoplus_{n=1}^\infty \hbar^{n-1}\op{Hom}(SA,\Sigma^nA)$ via the lifting map
\be\no
\frak{F}=e^{\frak{f}}=\sum_{n=0}^\infty\inv{n!}\frak{f}^{\w n} \;\text{.}
\ee
Let $(A,\frak{L})$ and $(A^\prime,\frak{L}^\prime)$ be $IBL_\infty$-algebras. An $IBL_\infty$-morphism $\frak{F}\in\morph(A,A^\prime)$ from $(A,\frak{L})$ to $(A^\prime,\frak{L}^\prime)$ is a cohomomorphism 
of degree zero, that commutes with the differentials
\be\no
\frak{F}\circ\frak{L}=\frak{L}^\prime\circ\frak{F} \;\text{.}
\ee
Similarly a Maurer-Cartan element of an $IBL_\infty$-algebra $(A,\frak{L})$ is an element $\frak{c}\in\bigoplus_{n=1}^\infty \hbar^{n-1}\Sigma^nA$  of degree zero, satisfying
\be\no
\frak{L}(e^\frak{c})=0 \;\text{.}
\ee
The space of Maurer-Cartan elements of an $IBL_\infty$-algebra is denoted by $\mathcal{MC}(A,\frak{L})$. In analogy to the $L_\infty$ case we define gauge transformations  by a family of 
$IBL_\infty$-isomorphisms $\frak{U}_{\frak{\lambda}}(t)$ determined by
\be\no
\frac{d}{dt}\frak{U}_{\frak{\lambda}}(t)=[\frak{L},\liftd{\frak{\lambda}}(t)]\circ\frak{U}_{\frak{\lambda}}(t) \mand \frak{U}_{\frak{\lambda}}(0)=\op{id} \;\text{,}
\ee
where $\lambda(t)\in\bigoplus_{n=1}^\infty \hbar^{n-1}\Sigma^nA$ is of degree minus one.
Finally the moduli space of an $IBL_\infty$-algebra is the space of Maurer-Cartan elements modulo gauge transformations, that is $\mathcal{M}(A,\frak{L})=\mathcal{MC}(A,\frak{L})/\sim$.
Obviously one recovers the $L_\infty$ structures in the limit $\hbar\to0$.


\end{document}